\tikzset{p0/.style = {shape = circle, draw, thick, minimum size = 0.7cm}}
\tikzset{p1/.style = {rectangle, minimum size=.7cm, draw, thick}}
\tikzset{>=stealth, shorten >=1pt}
\tikzset{every edge/.style = {thick, ->, draw}}
\tikzset{every loop/.style = {thick, ->, draw}}
\newcommand{\buffer}{\gamma}
\newcommand{\markl}{\hspace{-.5pt}\raisebox{.4pt}{\text{\scalebox{.85}{\ding{220}}}}\hspace{-.5pt}}
\newcommand{\markr}{\hspace{-.5pt}\raisebox{.4pt}{\text{\reflectbox{\scalebox{.85}{\ding{220}}}}}\hspace{-.5pt}}
\newcommand{\sharpsym}{\ensuremath{\mathop{\protect\raisebox{-0.5pt}{\protect\scalebox{1.15}{\protect\UseVerb{sharp}}}}}}
\newcommand{\inc}{\mathtt{i}}
\newcommand{\eps}{\boldsymbol{\epsilon}}
\newcommand{\initmark}{I}
\newcommand{\bigo}[0]{\mathcal{O}}
\newcommand{\size}[1]{|#1|}
\newcommand{\set}[1]{\{ #1 \}}
\newcommand{\nats}{\mathbb{N}}
\renewcommand{\epsilon}{\varepsilon}
\newcommand{\arena}{\mathcal{A}}
\newcommand{\game}{\mathcal{G}}
\newcommand{\cost}{\mathrm{Cst}}
\newcommand{\col}{\Omega}
\newcommand{\cp}{\mathrm{CostParity}}
\newcommand{\init}{\mathrm{Init}}
\newcommand{\update}{\mathrm{Upd}}
\newcommand{\paritydist}{\mathrm{Cor}}
\newcommand{\streettdist}{\mathrm{StCor}}
\newcommand{\answer}[1]{\mathrm{Ans}({#1})}
\newcommand{\up}{\mathbf{UP}}
\newcommand{\coup}{\mathbf{coUP}}
\newcommand{\exptime}{\mathbf{EXPTIME}}
\newcommand{\ptime}{\mathbf{PTIME}}
\newcommand{\twoexp}{\mathbf{2EXPTIME}}
\newcommand{\pspace}{\mathbf{PSPACE}}
\newcommand{\threeexp}{\mathbf{3EXPTIME}}
\newcommand{\R}{\mathfrak{R}}
\newcommand{\aut}{\mathcal{A}}
\newcommand{\delaygame}[1]{\Gamma\!_{f}(#1)}
\newcommand{\delaygamep}[1]{\Gamma\!_{f'}(#1)}
\newcommand{\delaygamec}[2]{\Gamma\!_{f_{#1}}(#2)}
\newcommand{\SigmaI}{\Sigma_I}
\newcommand{\SigmaO}{\Sigma_O}
\newcommand{\stratO}{\tau_O}
\newcommand{\stratI}{\tau_I}
\newcommand{\p}{P}
\title{Games with Costs and Delays}
\author{Martin Zimmermann\thanks{Supported by the project ``TriCS'' (ZI 1516/1-1) of the German Research Foundation (DFG).}}
\institute{Reactive Systems Group, Saarland University, 66123 Saarbrücken, Germany\\
 \email{zimmermann@react.uni-saarland.de}}
\begin{document}
\maketitle

\begin{abstract} 
We demonstrate the usefulness of adding delay to infinite games with quantitative winning conditions. In a delay game, one of the players may delay her moves to obtain a lookahead on her opponent's moves. We show that determining the winner of delay games with winning conditions given by parity automata with costs is EXPTIME-complete and that exponential bounded lookahead is both sufficient and in general necessary. Thus, although the parity condition with costs is a quantitative extension of the parity condition, our results show that adding costs does not increase the complexity of delay games with parity conditions. 

Furthermore, we study a new phenomenon that appears in quantitative delay games: lookahead can be traded for the quality of winning strategies and vice versa. We determine the extent of this tradeoff. In particular, even the smallest lookahead allows to improve the quality of an optimal strategy from the worst possible value to almost the smallest possible one. Thus, the benefit of introducing lookahead is twofold: not only does it allow the delaying player to win games she would lose without, but lookahead also allows her to improve the quality of her winning strategies in games she wins even without lookahead. 
\end{abstract}

\section{Introduction}
\label{sec_intro}
Infinite games are one of the pillars of logics and automata theory with a plethora of applications, e.g., as solutions for the reactive synthesis problem and for the model-checking of fixed-point logics, and as the foundation of the game-based proof of Rabin's theorem to name a few highlights. 

The study of infinite games in automata theory was initiated by the seminal Büchi-Landweber theorem~\cite{BuechiLandweber69}, which solved Church's controller synthesis problem~\cite{Church63}: the winner of an infinite-duration two-player zero-sum  perfect-information game with $\omega$-regular winning condition can be determined effectively. Furthermore, a finite-state winning strategy, i.e., a strategy that is finitely described by an automaton with output, can be computed effectively. Ever since, this result has been extended along various axes, e.g., the number of players, the type of winning condition, the type of interaction between the players, the informedness of the players, etc. Here, we consider two extensions: first, we allow one player to delay her moves, which gives her a lookahead on her opponent's moves. Second, we consider quantitative winning conditions, i.e., the finitary parity condition and the parity condition with costs, which both strengthen the classical parity condition. 

\paragraph*{Delay Games} The addition of delay to the original setting of Büchi and Landweber was already studied by Hosch and Landweber himself~\cite{HoschLandweber72} shortly after the publication of the original result. It concerns the ability of one player to delay her moves in order to gain a lookahead on her opponent's moves. This allows her to win games she would lose without lookahead, e.g., if her first move depends on the third move of the opponent. Delay games are not only useful to model buffers and transmission of data, but the existence of a continuous uniformization function for a given relation is also characterized by winning a delay game (see~\cite{HoltmannKaiserThomas12} for details). Recently, delay games have been reconsidered by Holtmann et al.~\cite{HoltmannKaiserThomas12} and the first comprehensive study of their properties has been initiated~\cite{FridmanLoedingZimmermann11,KleinZimmermann15b,KleinZimmermann16,KleinZimmermann16b,Zimmermann16}. 

In particular, determining the winner of delay games with winning conditions given by deterministic parity automata is $\exptime$-complete, where hardness already holds for safety conditions~\cite{KleinZimmermann16}. These results have to be contrasted with those for the delay-free setting: determining the winner of parity games is in $\up \cap \coup$~\cite{Jurdzinski98}, while the special case of safety games is in $\ptime$. Additionally, exponential constant lookahead is both sufficient and in general necessary~\cite{KleinZimmermann16}, i.e., unbounded lookahead does not offer an additional advantage for the delaying player in $\omega$-regular delay games. Either she wins by delaying an exponential number of moves at the beginning of a play and then never again, or not at all.

\paragraph*{Parity Games with Costs} Another important generalization of the Büchi-Landweber theorem concerns quantitative winning conditions, as qualitative ones are often too weak to capture the specifications of a system to be synthesized. Prominent examples of quantitative specification languages are parameterized extensions of Linear Temporal Logic (LTL) like Prompt-LTL~\cite{KupfermanPitermanVardi09}, which adds the prompt-eventually operator whose scope is bounded in time, and the finitary parity condition~\cite{ChatterjeeHenzingerHorn09}. The latter strengthens the classical parity condition by requiring a fixed, but arbitrary, bound~$b$ on the distance between occurrences of odd colors and the next larger even color. Recently, a further generalization has been introduced in the setting of arenas with costs, i.e., non-negative edge weights. Here, the bound~$b$ on the distance is replaced by a bound on the cost incurred between occurrences of odd colors and the next larger even color~\cite{FijalkowZimmermann14}. This condition, the parity condition with costs, subsumes both the classical parity condition and the finitary parity condition. 

Although the finitary parity condition is a strengthening of the classical parity condition, the winner of a finitary parity game can be determined in polynomial time~\cite{ChatterjeeHenzingerHorn09}, i.e., the solution problem is simpler than that of parity games (unless $\up \cap \coup = \ptime $). In contrast, parity games with costs are not harder than they have to be, i.e., as hard as parity games~\cite{MogaveroMS15}. Furthermore, such games induce an optimization problem: determine the smallest bound~$b$ that allows to satisfy the condition with respect to $b$. It turns out that determining the optimal bound is much harder~\cite{WeinertZimmermann16}: determining whether a given parity game with costs can be won with respect to a given bound~$b$ is $\pspace$-complete, where hardness already holds for the special case of finitary parity games. The same phenomenon manifests itself in the memory requirements of winning strategies for finitary parity games: playing optimally requires exponentially more memory than just winning. 

\paragraph*{Quantitative Delay Games} 

Recently, both extensions have been investigated simultaneously, first in the form of delay games with WMSO$+$U winning conditions, i.e., weak monadic second-order logic with the unbounding quantifier~\cite{Bojanczyk11}. This logic turned out to be too strong: in general, unbounded lookahead is necessary to win such games and the problem of determining the winner has only partially been resolved~\cite{Zimmermann16}. Thus, the search for tractable classes of quantitative winning conditions started. A first encouraging candidate was found in Prompt-LTL: determining the winner of Prompt-LTL delay games is $\threeexp$-complete and triply-exponential constant lookahead is sufficient and in general necessary~\cite{KleinZimmermann16b}. This comparatively high complexity has to be contrasted with that of delay-free Prompt-LTL games, which are already $\twoexp$-complete~\cite{PnueliRosner89a}. Thus, adding delay incurs an exponential blowup, which is in line with other results for delay games mentioned above. 

\paragraph*{Our Contribution} 

In this work, we investigate delay games with finitary parity conditions and parity conditions with costs and demonstrate the positive effects of adding lookahead to quantitative games. As both the finitary parity condition and the parity condition with costs subsume the classical safety condition, we immediately obtain $\exptime$-hardness and an exponential lower bound on the necessary lookahead. 

First, we show that the exponential lower bound on the lookahead is tight, by presenting a matching  upper bound via a pumping argument for the opponent of the delaying player: if he wins with a large enough lookahead, then he can pump his moves to win for an arbitrarily large lookahead. 

Second, using this result, we construct a \emph{delay-free} parity game with costs that is equivalent to the delay game with fixed exponential lookahead. Furthermore, we show that the resulting game can be solved in exponential time, i.e., we prove $\exptime$-completeness of determining the winner of delay games with finitary parity conditions and parity conditions with costs. 

These two results show that adding costs to delay games with parity conditions comes for free, i.e., the complexity does not increase, both in terms of necessary lookahead and the computational complexity of determining the winner. These results are similar to the ones for Prompt-LTL delay games, which are just as hard as LTL delay games. However, unlike the latter games, which are $\threeexp$-complete, delay games with quantitative extensions of parity conditions are only $\exptime$-complete.

Third, we investigate the power of delay in quantitative games: having a lookahead on the moves of her opponent allows a player not only to win games she would lose without this advantange, but also to improve the (semantic) quality of her winning strategies in games she wins even without lookahead. For example, we present a game induced by an automaton with $\bigo(n)$ states such that she wins the delay-free game with bound~$n$ on every play, which is close to the worst-case~\cite{FijalkowZimmermann14}. However, with a lookahead of just a single move, she wins the game with bound~$1$. Thus, lookahead can be traded for quality and vice versa. In further examples, we show that this tradeoff can be gradual, i.e., decrementing the bound requires one additional move lookahead, and that exponential lookahead may be necessary to obtain the optimal bound. Furthermore, we present matching upper bounds on the tradeoff between lookahead and quality. 

All examples we present in this work are from a very small fragment, finitary Büchi games, i.e., every edge has cost~$1$ and the only colors are $1$ and $2$. Thus, every occurrence of a $1$ has to be answered quickly by an occurrence of a $2$. Our results show that even these games are as hard as general parity games with costs, i.e., games with arbitrary weights and an arbitrary number of colors. 

Fourth, we consider the more general setting of delay games with winning conditions given by Streett automata with costs. Streett conditions are obtained from parity conditions by abandoning the hierarchy between requests (even numbers) and responses (odd numbers) by allowing arbitrary sets of states to represent requests and responses. We prove that the techniques developed for parity conditions can be extended to Streett conditions with costs, at the cost of an exponential blowup: doubly-exponential constant lookahead is sufficient and such games can be solved in doubly-exponential time. Also, we show that an \emph{optimal} winning strategy requires in general doubly-exponential lookahead. Whether general strategies do as well is an open problem. Note that this is already the case for delay games with qualitative Streett conditions.

\section{Definitions}
\label{sec_defs}
We denote the non-negative integers by $\nats$. Given two infinite words $\alpha \in \SigmaI^\omega$ and $\beta  \in \SigmaO^\omega$ we write ${ \alpha \choose \beta}$ for the word ${\alpha(0) \choose \beta(0) } {\alpha(1) \choose \beta(1) }  {\alpha(2) \choose \beta(2) } \cdots  \in (\SigmaI \times \SigmaO)^\omega$. Similarly, we write ${x \choose y}$ for finite words $x$ and $y$, provided they are of equal length.

\subsection{Parity Automata with Costs}
\label{subsec_automata}
A parity automaton with costs~$\aut = (Q, \Sigma, q_\initmark, \delta, \col, \cost)$ consists of a finite set $Q$ of states, an alphabet~$\Sigma$, an initial state~$q_\initmark \in Q$, a (deterministic and complete) transition function~$\delta \colon Q \times \Sigma \rightarrow Q$, a coloring~$\col \colon Q \rightarrow \nats$, and a cost function~$\cost \colon \delta \rightarrow \set{\eps, \inc}$ (here, and whenever convenient, we treat $\delta$ as a relation). If $\cost(q, a, q') = \eps$, then we speak of an $\epsilon$-transition, otherwise of an increment-transition. The size of $\aut$ is defined as $\size{\aut} = \size{Q}$.

The run of $\aut$ starting in $q_0 \in Q$ on a finite word~$w = a_0 \cdots a_{n-1}$ is the unique sequence
\begin{equation}
(q_0, a_0, q_1)\, (q_1, a_1, q_2) \, \cdots \, (q_{n-1}, a_{n-1}, q_n) \in \delta^*, \label{eq_run} \end{equation}
i.e., $q_{j+1} = \delta(q_j,a_j)$ for every $j < n$.
We say the run ends in $q_n$ and define its cost as the number of traversed increment-transitions. The run of $\aut$ starting in $q_0$ on an infinite word and its cost is defined analogously. If we speak of \emph{the} run of $\aut$ on an infinite word, then we mean the one starting in $q_\initmark$. 
 
As usual in parity games, we interpret the occurrence of an odd color~$c$ as a request, which has to be answered by an even $c' > c$. Hence, we define $\answer{c} = \set{c' \in \col(Q) \mid c' > c \text{ and $c'$ is even} }$. An infinite run $\rho = (q_0, a_0, q_1)\, (q_1, a_1, q_2) \, (q_{2}, a_{2}, q_3) \, \cdots $
is accepting, if it satisfies the parity condition with costs: there is a bound~$b \in \nats$ such that for almost all $n$ with odd~$\col(q_n)$, the cost between the positions~$n$ and $n'$ is at most $b$, where $n'$ is minimal such that $\col(q_{n'})$ answers the request~${\col(q_n)}$. Formally, we require the cost~$\limsup_{n\rightarrow \infty} \paritydist(\rho,n) $ of $\rho$ to be finite,
where the cost-of-response~$\paritydist(\rho, n)$ is defined to be $0$ if $\col(q_n)$ is even and defined to be
	\begin{align*}
\min\set{ \cost((q_n, a_n, q_{n+1})\cdots (q_{n'-1}, a_{n'-1}, q_{n'})) \mid
  n' > n \text{ and } \col(q_{n'}) \in \answer{\col(q_n)}}
	\end{align*}
if $\col(q_n)$ is odd,
where $\min \emptyset = \infty$. 

The language of $\aut$, denoted by $L(\aut)$, contains all infinite words whose run of $\aut$ is accepting. If every transition in $\aut$ is an $\epsilon$-transition, then the parity condition with costs boils down to the parity condition, i.e., $\aut$ is a classical parity automaton and $L(\aut)$ is $\omega$-regular. In contrast, if every transition is an increment-transition, then $\aut$ is a finitary parity automaton, which have been studied by Chatterjee and Fijalkow~\cite{ChatterjeeFijalkow11}. 

\begin{remark}
In the following, and without mentioning it again, we only consider parity automata with both even and odd colors. If this is not the case, then the automaton is trivial, i.e., its language is either universal or empty.
\end{remark}

In our examples, we often use finitary Büchi automata, i.e., automata having only increment-transitions and colors~$1$ and $2$. Hence, every state of color~$2$ answers all pending requests and all other states open a request (of color~$1$). Hence, a run of such an automaton is accepting, if there is a bound~$b$ such that every infix of length~$b$ contains a transition whose target state has color~$2$. Furthermore, the cost of the run is the smallest~$b'$ such that almost every infix of length~$b'$ contains a transition whose target state has color~$2$. In figures, we denote states of color~$2$ by doubly-lined states, all other states have color~$1$, and we do not depict the cost function, as it is trivial.

\begin{example}
\label{example_automata}
Consider the finitary Büchi automaton~$\aut_n$, for $n>1$,  on the left side of Figure~\ref{fig_automataexample} over the alphabet~$\SigmaI \times \SigmaO = \set{1, \ldots, n, \#}^2$. A word~${\alpha(0) \cdots \alpha(m) \choose \beta(0) \cdots \beta(m)}$ leads~$\aut_n$ from the initial state, which opens a request, to the rightmost state, which answers all requests, without visiting the latter in the meantime, if, and only if, it has the following form: $\alpha(i) \neq \sharpsym$ for all $i$, $\beta(i) = \sharpsym$ if, and only if $i = m$, and $\alpha(1) \cdots (m)$ has two occurrences of $\beta(0)$ with no larger letter in between (a so-called bad $j$-pair for $j = \beta(0)$), where the second occurrence is $\alpha(m)$. Call such a word productive.

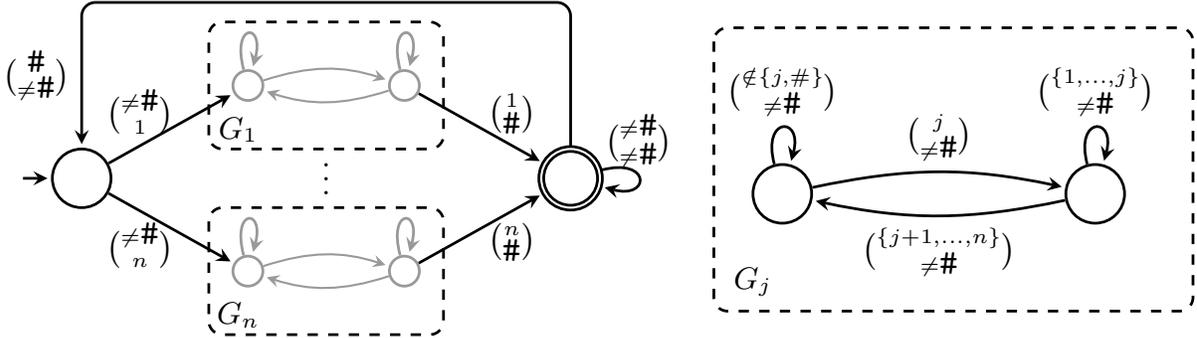
\begin{figure*}
\centering
\scalebox{1.3}{
\begin{tikzpicture}

\node at (6.3,0) {

\begin{tikzpicture}[thick]
\tikzset{state/.style = {shape = circle, draw, thick, minimum size = 0.6cm, inner sep=0}}

\node[state] 				(lj)	 at (0,0)	{};
\node[state] 				(rj)	 at (3.2,0)	{};

\node 								 at (-0.3, -.9) {$G_j$};
 
\path[-stealth] 
(lj) edge[loop above] node{${\notin \set{j,\#} \choose \neq \sharpsym}$} ()
(rj) edge[loop above] node{${\set{1, \ldots, j} \choose  \neq \sharpsym}$} ()
(lj) edge[bend left=12] node[above]{${j \choose  \neq \sharpsym}$} (rj)
(rj) edge[bend left=12] node[below]{${ \set{j+1, \ldots, n} \choose  \neq \sharpsym}$} (lj)
;  
			
\draw[rounded corners, dashed] (-.7,1.7) rectangle (4.2,-1.2)
;

\end{tikzpicture}
};

\node at (0,0) {
\begin{tikzpicture}[thick]
\tikzset{state/.style = {shape = circle, draw, thick, minimum size = 0.6cm}}

\node[state] 				(init)	 at (-.5,0) 	{};
\node[state,accepting] 	(final) at (4.5,0) {};

\path[->] (-1.1,0) edge (init);

\draw[rounded corners, dashed] (.8,.3) rectangle (3.2,1.6);
\node  at (2, .1) {$\vdots$};
\draw[rounded corners, dashed] (.8,-.3) rectangle (3.2,-1.6);
\node 								 at (1.1, .5) {\footnotesize $G_1$};
\node 								 at (1.1, -1.4) {\footnotesize $G_n$};

\path (final) edge[loop right] node[above] {$ {\neq \sharpsym \choose \neq \sharpsym} $} ();

\tikzset{state/.style = {shape = circle, draw, thick, minimum size = 0.08cm}}

\node[state, black!40] (l1) at (1.2,.95) {}; 
\node[state, black!40] (r1) at (2.8,.95) {}; 

\path[black!40]
(l1) edge[semithick, bend left =18] (r1)
(r1) edge[semithick, bend left =18] (l1)
(l1) edge[semithick, loop above] ()
(r1) edge[semithick, loop above] ()
;

\node[state, black!40] (ln) at (1.2,-.95) {}; 
\node[state, black!40] (rn) at (2.8,-.95) {}; 

\path[black!40]
(ln) edge[semithick, bend left =18] (rn)
(rn) edge[semithick, bend left =18] (ln)
(ln) edge[semithick, loop above] ()
(rn) edge[semithick, loop above] ()
;

\path
(init) edge node[above, near start] {${\neq\sharpsym \choose 1}$} (l1)
(init) edge node[below, near start] {${\neq\sharpsym \choose n}$} (ln)
(r1) edge  node[above, near end] {${1 \choose \sharpsym}$} (final)
(rn) edge  node[below, near end] {${n \choose \sharpsym}$}(final);

\path[draw, rounded corners,->] (final.north) -- (4.5,1.8)  -- (-.5,1.8) -- node[left] {${\sharpsym \choose \neq \sharpsym}$} (init.north);

\end{tikzpicture}
};
\end{tikzpicture}}
\caption{The automaton~$\aut_n$ (on the left), which contains gadgets~$G_1 , \ldots, G_n$ (on the right). Transitions not depicted are defined as follows: those of the form~${ \sharpsym \choose b }$ for arbitray~$b \in \SigmaO$ lead to an (accepting) sink of color~$2$, those of the form~${ \neq \sharpsym \choose \sharpsym }$ lead to a (rejecting) sink of color~$1$. Both sinks are not drawn.}
\label{fig_automataexample}
\end{figure*}

Then, the language of $\aut_n$ contains two types of words: first all those that can be decomposed into $w_0 x_0 { \sharpsym \choose \neq \sharpsym } w_1 x_1 { \sharpsym \choose \neq \sharpsym } w_2 x_2 { \sharpsym \choose \neq \sharpsym } \cdots$ where each $w_i$ is productive, where $\sup_i \size{w_i} < \infty$, and where each $x_i$ is in $(\set{1, \ldots, n}^2)^*$. The other way $\aut_n$ accepts a word is by reaching the accepting sink state, which it does for all words of the form~$w_0 x_0 { \sharpsym \choose \neq \sharpsym } \cdots w_{m-1} x_{m-1} { \sharpsym \choose \neq \sharpsym } w_m' {\sharpsym \choose *} (\set{1, \ldots, n}^2)^\omega$ where each $w_i$ is productive, each $x_i$ is in $(\set{1, \ldots, n}^2)^*$, $w_m'$ is a strict prefix of a productive word, and $*$ denotes an arbitrary letter.
\end{example}

To conclude this introductory subsection, we state a replacement lemma for accepting runs:  fix a parity automaton with costs~$\aut = (Q, \Sigma, q_\initmark, \delta, \col, \cost)$ and let $(q_0, a_0, q_1) \cdots  (q_{n-1}, a_{n-1}, q_n)$ be a non-empty finite run of $\aut$. Its type is the tuple~$(q_0, q_n, c_0, c_1, \ell)$  where $c_0, c_1 \in \col(Q) \cup \set{\bot}$ and $\ell \in \set{\eps, \inc}$ are defined (using $\max \emptyset = \bot$) as 
\begin{itemize}
	\item $c_0 = \max \set{\col(q_j) \mid 0 \le j \le n \text{ and } \col(q_j) \text{ is even}}$,
	\item $c_1 = \max\set{ \col(q_j) \mid 0 \le j \le n \text{, } \col(q_j) \text{ is odd, and } \col(q_{j'}) \notin \answer{\col(q_j)} \text{ for all } j < j' \le n}$, i.e., $c_1$ is the maximal unanswered request, and
	\item $\ell = \inc$ if, and only if, the run contains an increment-transition.
\end{itemize}

The following lemma shows that we can replace infixes of accepting runs by infixes of the same type, provided the replacements are of bounded length.

\begin{lemma}
\label{lemma_runreplacement}
Let $(\rho_j^1)_{j \in \nats}$ and $(\rho_j^2)_{j \in \nats}$ be sequences of finite runs of a parity automaton with costs such that $\rho_j^1$ and $\rho_j^2$ have the same type for every $j$, and such that $\sup_j \size{\rho_j^2} = d < \infty$. Let $\rho' = \rho_0^1 \rho_1^1 \rho_2^1 \cdots $ and $\rho'' = \rho_0^2  \rho_1^2 \rho_2^2 \cdots$.

If $\limsup_{n\rightarrow \infty} \paritydist(\rho',n) \le b$ for some $b$, then $\limsup_{n\rightarrow \infty} \paritydist(\rho'',n) \le (b+2)\cdot d $, i.e., if  $\rho'$ is an accepting run, then $\rho''$ is an accepting run as well.
\end{lemma}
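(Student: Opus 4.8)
The plan is to exploit that $\rho'$ and $\rho''$ carry the same sequence of block types, so that requests propagate across blocks in exactly the same way, while the cost of each block is controlled asymmetrically. First I would record two preliminary facts. Since corresponding blocks $\rho_j^1$ and $\rho_j^2$ begin and end in the same states, $\rho''$ is a genuine run whenever $\rho'$ is. Moreover, a pending request of odd color~$c$ entering block~$i$ is answered inside that block if, and only if, the block contains an even color exceeding~$c$, i.e., if, and only if, $c_0^{(i)} > c$, where $(q_0^{(i)}, q_n^{(i)}, c_0^{(i)}, c_1^{(i)}, \ell^{(i)})$ denotes the common type of $\rho_i^1$ and $\rho_i^2$. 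Hence the block in which any pending request is eventually answered depends only on~$c$ and on the type sequence, and therefore coincides for $\rho'$ and $\rho''$.

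The quantitative heart is a block-wise cost comparison. Writing $\cost_{\rho}[i_1,i_2]$ for the number of increment-transitions taken across blocks $i_1, \dots, i_2$ of $\rho$, I observe that a block with $\ell^{(i)} = \eps$ contains no increment-transition in either run, whereas a block with $\ell^{(i)} = \inc$ has cost at least~$1$ in $\rho'$ and cost at most $\card{\rho_i^2} \le d$ in $\rho''$. Consequently, for every interval of block indices one has $\cost_{\rho''}[i_1,i_2] \le d \cdot \cost_{\rho'}[i_1,i_2]$. Note that this uses the bound on $\card{\rho_j^2}$ only; the blocks $\rho_j^1$ may be arbitrarily long, which is precisely why the comparison runs in this direction.

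Now fix a request of odd color~$c = \col(q_n)$ at a position~$n$ inside block~$j$ of $\rho''$ and bound $\paritydist(\rho'', n)$. If it is answered within block~$j$, the cost is at most the number of increments in $\rho_j^2$, hence at most~$d$. Otherwise $c$ is an unanswered request of block~$j$, so $c \le c_1^{(j)}$, and by the first paragraph it is answered in the first block $j' > j$ with $c_0^{(j')} > c$; its response cost splits into the cost of the suffix of block~$j$ after~$n$, plus $\cost_{\rho''}[j{+}1,j'{-}1]$, plus the cost of the prefix of block~$j'$ up to the answer, giving at most $d + \cost_{\rho''}[j{+}1,j'{-}1] + d$. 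Here lies the main obstacle, and the reason the type is the right invariant: the color~$c$ itself need not occur in $\rho_j^1$. Instead I use that, having the same type, $\rho_j^1$ carries an unanswered request of color~$c_1^{(j)} \ge c$, opened at some position~$p$ in block~$j$ of $\rho'$. Since answering this larger request needs an even color exceeding $c_1^{(j)} \ge c$, it is answered no earlier than block~$j'$, so the interval $[j{+}1, j'{-}1]$ lies inside its response window in $\rho'$. Combined with the cost comparison this yields $\cost_{\rho''}[j{+}1,j'{-}1] \le d \cdot \cost_{\rho'}[j{+}1,j'{-}1] \le d \cdot \paritydist(\rho', p)$.

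Finally I would invoke the hypothesis. As $\limsup_{n\rightarrow\infty} \paritydist(\rho', n) \le b$ and all costs are integers, all but finitely many positions of $\rho'$ have response cost at most~$b$; since the starting positions of the nonempty blocks tend to infinity, the $c_1^{(j)}$-request at position~$p$ satisfies $\paritydist(\rho', p) \le b$ for all but finitely many blocks~$j$. For every request of $\rho''$ located in such a block I then obtain $\paritydist(\rho'', n) \le d + d b + d = (b+2)d$ (and at most~$d$ in the within-block case), while the finitely many remaining requests do not affect the limit superior. Hence $\limsup_{n\rightarrow\infty} \paritydist(\rho'', n) \le (b+2)d$, and in particular $\rho''$ is accepting whenever $\rho'$ is.
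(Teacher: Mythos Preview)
Your proof is correct and follows essentially the same approach as the paper's own proof: both handle the within-block case trivially (cost at most~$d$), pass to the maximal unanswered request~$c_1^{(j)}$ in the corresponding block of~$\rho'$ as a proxy, bound the cost of the intermediate blocks by~$d\cdot b$ via the block-wise comparison of increment-transitions, and add~$2d$ for the two boundary blocks. Your exposition is in fact more careful than the paper's---in particular, your explicit inequality~$\cost_{\rho''}[i_1,i_2] \le d \cdot \cost_{\rho'}[i_1,i_2]$ makes precise what the paper phrases informally as ``each of the~$b$ increments in~$\rho'_s$ is (in the worst case) replaced by~$d$ increments''---but the underlying argument is the same.
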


\begin{proof}
Let $j_0$ be such that every request in the suffix~$\rho'_s = \rho^1_{j_0} \rho^1_{j_0+1} \rho^1_{j_0+2} \cdots $ of $\rho'$ is answered with cost at most $b$. Consider a position~$n$ in the corresponding suffix~$\rho''_s = \rho^2_{j_0} \rho^2_{j_0+1} \rho^2_{j_0+2} \cdots$ of $\rho''$ with odd color, i.e., a request in $\rho''$. We show that this request is answered with cost at most $(b+2) \cdot d$ by first identifying a corresponding request in $\rho_s'$.
 If it is already answered in $\rho_j^2$, then with cost at most $d \le (b+2) \cdot d$, as $\rho_j$ has at most $d$ increment transitions. Thus, assume it is not. Then, by $\rho_j^1$ and $\rho_j^2$ having the same type, there is a (potentially) larger request in $\rho_j^1$, which is not answered in $\rho_j^1$, but is answered with cost at most $b$. 
 
Furthermore, by the same reasoning as above, this answer (or an even larger one) also appears in $\rho_s''$. Also, each of the $b$ increments in $\rho_s'$ is (in the worst case) replaced by $d$ increments when going from $\rho_s'$ to $\rho_s''$. Finally, we have to account for an additional $2d$ increments, if both the request and the response are in some part $\rho_j^2$, i.e., they are replaced.. Altogether, the request at position~$n$ is answered with cost at most $(b+2)\cdot d$. Thus, in $\rho''$ almost all requests are answered with that cost. 
\end{proof}

The special case where both the lengths of the $\rho_j^1$ and the lengths of the $\rho_j^2$ are bounded is useful later on, too.

\begin{corollary}
\label{corollary_runreplacement}
Let $(\rho_j^1)_{j \in \nats}$ and $(\rho_j^2)_{j \in \nats}$ be sequences of finite runs of a parity automaton with costs such that $\rho_j^1$ and $\rho_j^2$ have the same type for every $j$, and such that $\sup_j \size{\rho_j^1} < \infty$ and $\sup_j \size{\rho_j^2} < \infty$.

Then,  $\rho_0^1 \rho_1^1  \rho_2^1 \cdots $ is an accepting run if, and only if, $ \rho_0^2  \rho_1^2 \rho_2^2 \cdots$ is an accepting run.
\end{corollary}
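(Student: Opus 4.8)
The plan is to derive the corollary by invoking Lemma~\ref{lemma_runreplacement} twice, once in each direction, exploiting that ``having the same type'' is a symmetric relation between the two sequences and that the corollary assumes both length bounds, one of which is exactly what the lemma needs in each direction.

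For the forward implication, I would assume that $\rho' = \rho_0^1 \rho_1^1 \rho_2^1 \cdots$ is accepting, i.e.,\ that $\limsup_{n\to\infty} \paritydist(\rho',n) \le b$ for some $b \in \nats$. Setting $d = \sup_j \size{\rho_j^2}$, which is finite by hypothesis, Lemma~\ref{lemma_runreplacement} applies verbatim and yields $\limsup_{n\to\infty}\paritydist(\rho'',n) \le (b+2)\cdot d < \infty$, so that $\rho'' = \rho_0^2 \rho_1^2 \rho_2^2 \cdots$ is accepting. This is the only place where the finiteness of $\sup_j \size{\rho_j^2}$ is used.

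For the backward implication, I would simply swap the roles of the two sequences, applying Lemma~\ref{lemma_runreplacement} to the sequences $(\rho_j^2)_{j\in\nats}$ and $(\rho_j^1)_{j\in\nats}$ in place of $(\rho_j^1)_{j\in\nats}$ and $(\rho_j^2)_{j\in\nats}$. Since $\rho_j^2$ and $\rho_j^1$ have the same type whenever $\rho_j^1$ and $\rho_j^2$ do, the type hypothesis is preserved, and the boundedness assumption required by the lemma for the ``target'' sequence becomes $\sup_j \size{\rho_j^1} < \infty$, which is precisely the second finiteness hypothesis of the corollary. Hence, if $\rho_0^2 \rho_1^2 \rho_2^2 \cdots$ is accepting, so is $\rho_0^1 \rho_1^1 \rho_2^1 \cdots$.

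There is no genuine obstacle here, as all the combinatorial content is already contained in Lemma~\ref{lemma_runreplacement}; the corollary is essentially a symmetry observation. The only point worth making explicit is \emph{why} the corollary must assume both length bounds while the lemma assumes only one: the lemma is asymmetric (it needs only the length of the replacing sequence to be bounded), so obtaining the two-sided equivalence requires bounding both $\sup_j \size{\rho_j^1}$ and $\sup_j \size{\rho_j^2}$, each bound being consumed by one of the two applications.
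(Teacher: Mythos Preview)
Your proposal is correct and matches the paper's intended approach exactly: the corollary is stated right after Lemma~\ref{lemma_runreplacement} as the symmetric special case, and the paper gives no separate proof because it follows by applying the lemma in both directions, just as you describe.
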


Finally, let us remark that the type of a run can be computed on the fly. Let $T_\aut = Q^2 \times (\col(Q) \cup\set{\bot})^2 \times \set{\eps, \inc}$ be the set of types of $\aut$. Define $\init_\aut(q) = (q,q,c_0, c_1, \eps)$ with 
\begin{itemize}
	\item $c_0 = \col(q)$, if $\col(q)$ is even, and $c_0  = \bot$ otherwise, and 
	\item $c_1 = \col(q)$, if $\col(q)$ is odd, and $c_1 = \bot$ otherwise.
\end{itemize}
Thus, $\init_\aut(q)$ can be understood as the type of the empty prefix of a run starting in state~$q$.
Furthermore, define $\update_\aut((q_0, q_1, c_0, c_1, \ell), a) =  (q_0, q_1', c_0', c_1', \ell')$ where 
\begin{itemize}
	\item $q_1' = \delta(q_1, a)$,
	\item $c_0' = \begin{cases}
		\max\set{c_0, \col(q_1')} &\text{ if $\col(q_1)$ is even,}\\
		c_0 &\text{ if $\col(q_1)$ is odd,}
	\end{cases}$
	\item $c_1' = \begin{cases}
 \max \set{c_1, \col(q_1')} &\text{if $\col(q_1')$ is odd,}\\
  \bot &\text{if $\col(q_1') \in \answer{ c_1}$,}\\
 c_1 &\text{else,}
 \end{cases}$ \quad and
	\item $\ell' = \inc$ if $\ell = \inc$ or if $(q_1, a, q_1')$ is an increment transition.  
\end{itemize}
Here, we use $\max \set{\bot, c} = c$ for every $c \in \col(Q)$. 

\begin{remark}
\label{remark_tracking}
The run~$(q, a, \delta(q,a))$ has type~$\update_\aut(\init_\aut(q), a)$, and for every non-empty finite run~$\rho = (q_0, a_0, q_1) \cdots  (q_{n-1}, a_{n-1}, q_n)$ and every $a \in \Sigma$: if $\rho$ has type~$t$, then $\rho \cdot (q_n, a ,\delta(q_n,a)) $ has type~$\update_\aut(t, a)$.
\end{remark}

\subsection{Delay Games}
\label{subsec_delaygames}
A delay function is a map~$f \colon \nats \rightarrow \nats\setminus\set{0}$, which is said to be constant, if $f(i)=1$ for every $i>0$. A delay game~$\delaygame{L}$ consists of a delay function~$f$ and a winning condition~$L \subseteq (\SigmaI \times \SigmaO)^\omega$. It is played in rounds~$i =0,1,2, \ldots$ between Player~$I$ and Player~$O$. In round~$i$, first Player~$I$ picks $u_i \in \SigmaI^{f(i)}$, then Player~$O$ answers by picking $v_i \in \SigmaO$. Player~$O$ wins the resulting play $(u_0, v_0) (u_1, v_1) (u_2, v_2) \cdots$ if the outcome~${ u_0 u_1 u_2 \cdots \choose v_0 v_1 v_2 \cdots}$ is in $L$, else Player~$I$~wins.

A strategy for Player~$I$ in $\delaygame{L}$ is a mapping~$\stratI \colon \SigmaO^* \rightarrow \SigmaI^*$ such that $\size{\stratI(w)} = f(\size{w})$; a strategy for Player~$O$ is a mapping~$\stratO \colon \SigmaI^*\rightarrow \SigmaO$. A play~$(u_0, v_0) (u_1, v_1) (u_2, v_2) \cdots$ is consistent with $\stratI$, if $u_i = \stratI(v_0 \cdots v_{i-1})$ for every $i$ and it is consistent with $\stratO$, if $v_i = \stratO(u_0 \cdots u_i)$ for every $i$. A strategy for Player~$\p \in \set{I,O}$ is winning, if the outcome of every consistent play is winning for Player~$\p$. In this case we say that Player~$\p$ wins $\delaygame{L}$. A game is determined if one of the players wins it.

Parity automata with costs recognize Borel languages~\cite{FijalkowZimmermann14}.\footnote{The result proven there is about winning conditions of games, which covers languages of automata as a special case.} Thus, the Borel determinacy result for delay games~\cite{KleinZimmermann15b} is applicable: delay games with winning conditions given by parity automata with costs are determined.

In this work, we study delay games whose winning conditions are given by a parity automaton with costs~$\aut$. In particular, we determine the complexity of determining whether Player~$O$ wins $\delaygame{L(\aut)}$ for some $f$ and what kind of $f$ is necessary in general. Furthermore, we study the tradeoff between the quality and the necessary lookahead of winning strategies. 

To this end, given a winning strategy~$\stratO$ for Player~$O$ in a delay game~$\delaygame{L(\aut)}$, let \[\cost_\aut(\stratO) = \sup_w ( \limsup_{n\to\infty}\paritydist(\rho(w), n)),\] where $w$ ranges over the outcomes of $\stratO$ and where $\rho(w)$ denotes the unique run of $\aut$ on $w$. A winning strategy for $\delaygame{L(\aut)}$ is optimal, if it has minimal cost among the winning strategies for $\delaygame{L(\aut)}$. Note that the delay function~$f$ is fixed here, i.e., the cost of optimal strategies might depend on $f$. Furthermore, the cost of a winning strategy might be infinite, if has outcomes with arbitrarily large cost. This is possible, as the parity condition with costs only requires a bound for every run, but not a uniform one among all runs. 

\begin{example}
\label{example_delaygame}
Consider delay games with winning conditions~$L_n = L(\aut_n)$, where $\aut_n$ is the parity automaton with costs from Example~\ref{example_automata}. 

We claim that Player~$O$ wins $\delaygame{L_n}$ if $f(0) \ge 2^n+1$, based on the following fact: every word over~$\set{1, \ldots, n}$ of length~$2^n$ contains a bad $j$-pair for some $j$~\cite{KleinZimmermann16}.
	
Now, consider the situation after the first move of Player~$I$: $\aut_n$ is in the initial state, as Player~$O$ has not yet produced any letters, and Player~$I$ has picked at least $2^n+1$ letters~$\alpha(0) \cdots \alpha(f(0)-1)$ lookahead. Let $w = \alpha(1) \cdots \alpha(2^n)$. We consider several cases:

If $w$ contains a $\sharpsym$ without a bad $j$-pair in the prefix before the first $\sharpsym$, then Player~$O$ can pick $1$'s until the automaton has reached the accepting sink state. The resulting run is accepting with cost~$0$.

Otherwise, $w$ contains a bad $j$-pair without an earlier occurrence of $\sharpsym$, for some $j$. Then, Player~$0$ picks $\beta(0) = j$ and afterwards arbitrary letters (but $\sharpsym$) until the second~$j$ constituting the bad $j$-pair, which she answers by a $\sharpsym$. The resulting word is productive, i.e., it leads the automaton to the rightmost state. Then, either the run stays in this state forever, which implies it is accepting, or the initial state is reached again. Then, Player~$O$ can iterate her strategy, as the size of the lookahead is non-decreasing.

 Hence, if the initial state is visited infinitely often, then the rightmost state of color~$2$ is visited at least once in every run infix of length~$2^n+1$, which implies that the resulting outcome is winning for Player~$O$. Every other outcome has only finitely many requests and is therefore winning with cost~$0$.  In particular, the cost of this strategy is bounded from above by $2^n+1$.
	
On the other hand, Player~$I$ wins $\delaygame{L_n}$ if $f(0) < 2^n+1$. Let $w_n \in \SigmaI$ be inductively defined by $w_1 = 1$ and $w_{j} = w_{j-1} \cdot  j \cdot w_{j-1} $ for $j>0$. Then, $w_n$ has length $2^n-1$ and contains no bad $j$-pair for every $j$, and trivially no $\#$. 
	
	Let Player~$I$ pick the prefix of length~$f(0)$ of $1w_n$ in the first round and let Player~$O$ answer by picking $\beta(0)$. If $\beta(0) = \#$, then $\aut_n$ reaches the rejecting sink-state of color~$1$. If $\beta(0) = j$, then Player~$I$ picks~$j'$ ad infinitum for some $j' \notin \set{j,\sharpsym}$. Then, $\aut_n$ neither reaches the accepting sink state (since he never plays a $\sharpsym$) nor the rightmost state (as he has not produced a bad $j$-pair). Hence, the resulting run is rejecting.
\end{example}

One can even show an exponential lower bound on the cost of an optimal winning strategy for Player~$O$: using the word~$w_n$ to begin each round and then restart the play by picking~$\sharpsym$ after Player~$O$ has reached the rightmost state by picking $\sharpsym$, he can enforce a cost-of-response of $2^n+1$ for infinitely many requests, i.e., a winning strategy for Player~$O$ has at least cost~$2^n+1$ as well. 

\begin{proposition}
\label{prop_lowerbounds}
For every $n$, there is a finitary parity automaton~$\aut_n$ of size~$\bigo(n)$ such that Player~$O$ wins $\delaygame{L(\aut_n)}$ if, and only if, $f(0) \ge 2^n+1$. Furthermore, the cost of an optimal winning strategy for $\delaygame{L(\aut_n)}$ is $2^n+1$, for every such $f$.
\end{proposition}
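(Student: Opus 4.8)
The first assertion together with the upper bound on the optimal cost is essentially already established in Example~\ref{example_delaygame}, so I would simply invoke it: there it is checked that $\size{\aut_n}=\bigo(n)$, that Player~$O$ wins $\delaygame{L(\aut_n)}$ whenever $f(0)\ge 2^n+1$ by a strategy whose cost is bounded from above by $2^n+1$, and that Player~$I$ wins whenever $f(0)<2^n+1$. By determinacy of these games the two implications combine to the stated equivalence, and the strategy just mentioned witnesses that the optimal cost is \emph{at most} $2^n+1$. Hence the only remaining task is the matching lower bound: \emph{every} winning strategy~$\stratO$ for Player~$O$ satisfies $\cost_\aut(\stratO)\ge 2^n+1$.

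For the lower bound my plan is to fix an arbitrary winning strategy~$\stratO$ and to expose one play consistent with~$\stratO$ along which infinitely many requests are answered with cost-of-response at least $2^n+1$; since $\cost_\aut(\stratO)$ is the supremum over the outcomes of $\stratO$ of $\limsup_{n\to\infty}\paritydist(\cdot,n)$, such a play already forces $\cost_\aut(\stratO)\ge 2^n+1$. To produce it I let Player~$I$ act in cycles, each cycle beginning with the automaton in the initial state, i.e.\ with a pending request. At the start of a cycle Player~$I$ feeds the word~$w_n$ of length~$2^n-1$ from Example~\ref{example_delaygame}, which contains \emph{no} bad $j$-pair for \emph{any}~$j$. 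Since reaching the request-answering rightmost state requires completing a productive word, and a productive word requires a bad $\beta(0)$-pair, the request opened at the start of the cycle cannot be answered while only the letters of~$w_n$ have been read; hence its cost-of-response is at least $2^n+1$. Having observed Player~$O$'s commitment $j=\beta(0)$ in the meantime, Player~$I$ then supplies a letter completing a bad $j$-pair, forcing the winning strategy~$\stratO$ to answer, and afterwards plays~$\sharpsym$ to send the automaton from the rightmost state back to the initial state, reopening a request and starting the next cycle.

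The first delicate point is the interaction with delay: Player~$I$ must commit the opening letters of each cycle before Player~$O$ reveals~$j$, since Player~$O$ enjoys a lookahead of $f(0)-1\ge 2^n$ symbols. This is precisely why the \emph{universal} bad-pair-free word~$w_n$ is the right object to feed — it blocks an early response simultaneously for every possible value of~$j$, so that the bound $2^n+1$ is forced irrespective of how Player~$O$ resolves her choice and of how the remaining prematurely committed letters look. The main obstacle, however, will be to guarantee that these costly responses occur \emph{infinitely often} rather than only finitely often: I must preclude that Player~$O$ diverts the play into an ultimately request-free accepting behaviour (for instance by steering into the accepting sink), which would leave the tail of the run on even colours and thus collapse the $\limsup$ to a small value. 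The crux is that leaving the rightmost state is triggered only by Player~$I$, so by adhering to the restart discipline he keeps the initial state recurrent, while the winning requirement on~$\stratO$ forces each reopened request to be answered through a gadget — any premature~$\sharpsym$ of Player~$O$ leads to the rejecting sink, and never answering violates the parity-with-costs condition. Establishing that this restart discipline is compatible with the delay and genuinely yields an infinite play carrying infinitely many responses of cost at least $2^n+1$ is the heart of the argument; once it is in place, $\cost_\aut(\stratO)\ge 2^n+1$ follows, and with the upper bound the optimal cost is exactly $2^n+1$.
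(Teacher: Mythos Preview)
Your proposal is correct and follows essentially the same approach as the paper, whose proof is the brief paragraph preceding the proposition together with Example~\ref{example_delaygame}: Player~$I$ opens each round with the bad-pair-free word~$w_n$, waits until Player~$O$ (who must, being winning) drives the run to the rightmost state, and then restarts with~$\sharpsym$, yielding infinitely many requests answered with cost at least~$2^n+1$. One small point: your intermediate step of having Player~$I$ ``supply a letter completing a bad $j$-pair'' is superfluous and its phrasing (``forcing $\stratO$ to answer'') is slightly misleading---Player~$I$ need not help, he simply plays arbitrary non-$\sharpsym$ letters after $w_n$ until he \emph{observes} that Player~$O$ has reached the rightmost state (which the winning requirement guarantees will happen), and only then plays~$\sharpsym$; this also cleanly resolves the delay-compatibility issue you flag as the heart of the argument.
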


\section{Constant Lookahead Suffices}
\label{sec_constant}
In this section, we prove that constant delay functions suffice for Player~$O$ to win delay games with winning conditions specified by parity automata with costs and give an exponential upper bound on the necessary initial lookahead. To this end, we generalize a similar result proven for delay games with winning conditions given by parity automata~\cite{KleinZimmermann16}. This is the first step towards proving that the winner of such a game can be determined effectively. 

\begin{theorem}
\label{thm_constantsuffices}
Let $L$ be recognized by a parity automaton with costs~$\aut$ with $n$ states and $k$ colors, and let $f$ be the constant delay function with $f(0) = 2^{2n^4k^2+1}$. The following are equivalent:
\begin{enumerate}

	\item Player~$I$ wins $\delaygame{L(\aut)}$.

	\item Player~$I$ wins $\delaygamep{L(\aut)}$ for every delay function~$f'$.

\end{enumerate}
\end{theorem}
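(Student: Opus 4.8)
The plan is to prove the non-trivial implication, namely that if Player~$I$ wins the game $\delaygame{L(\aut)}$ with the fixed exponential lookahead $f(0) = 2^{2n^4k^2+1}$, then he wins $\delaygamep{L(\aut)}$ for \emph{every} delay function $f'$. The converse direction is immediate, or rather follows from a routine monotonicity observation: having more lookahead only helps Player~$O$, so if Player~$I$ wins against the large constant lookahead, he should \emph{a fortiori} win against smaller lookaheads; however, since $f'$ ranges over arbitrary (possibly much larger) delay functions, the real content is showing that the fixed exponential lookahead is already the worst case for Player~$O$. Thus the core of the argument is a \emph{pumping} construction: a winning strategy for Player~$I$ under the bounded lookahead must be liftable to a winning strategy under arbitrarily large lookahead.

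The key technical device will be an equivalence relation on finite words read by $\aut$ that records exactly the information needed to apply the replacement machinery of Lemma~\ref{lemma_runreplacement} and Corollary~\ref{corollary_runreplacement}. Concretely, I would track, for each block of Player~$I$'s letters, a function that maps each possible state of $\aut$ to the \emph{type} (in the sense defined before Lemma~\ref{lemma_runreplacement}) of the run that block induces starting from that state. Since the set of types is $T_\aut = Q^2 \times (\col(Q)\cup\set{\bot})^2 \times \set{\eps,\inc}$, which has size polynomial in $n$ and $k$, a function $Q \to T_\aut$ has at most $|T_\aut|^{|Q|}$ values; bounding this quantity is precisely what yields the exponent $2n^4k^2+1$. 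Two $\SigmaI$-blocks are equivalent if they induce the same such function, and the number of equivalence classes is therefore at most $f(0)$. This is exactly the data that lets us substitute one block for an equivalent one without changing the acceptance status of any infinite run, via Corollary~\ref{corollary_runreplacement}.

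The construction then proceeds as follows. Given a winning strategy $\stratI$ for Player~$I$ in the bounded-lookahead game, I would build a strategy $\stratI'$ in the game with arbitrary lookahead $f'$ by having Player~$I$ play, in each round, a block that is \emph{equivalent} to the block $\stratI$ would play, reusing or "padding" with equivalent infixes so that the block lengths match the requirements of $f'$ while the induced type-functions are preserved. Because Player~$O$'s moves are single letters in both games, a play consistent with $\stratI'$ projects, block by block, to a play consistent with $\stratI$ that has the same sequence of types; Corollary~\ref{corollary_runreplacement} (in the case of bounded block lengths) or Lemma~\ref{lemma_runreplacement} (when the padded blocks can grow but are uniformly bounded within each run) then guarantees that the outcome of the $f'$-play is rejecting whenever the corresponding bounded-lookahead outcome is. The main obstacle I anticipate is bookkeeping the correspondence between rounds of the two games: under a larger delay function $f'$, Player~$I$ must commit to more letters per round than under $f$, so one must carefully schedule how the equivalent blocks from the $\stratI$-play are repackaged into the (differently sized) rounds of the $f'$-game, and verify that the block lengths stay uniformly bounded so that the replacement lemmas apply with a finite $d$. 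Getting this scheduling right — ensuring that every round of the $f'$-game is answered using only information Player~$I$ legitimately has, and that the type sequence is genuinely preserved rather than merely approximated — is where the delicate part of the proof lies.
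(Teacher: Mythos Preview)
Your high-level plan matches the paper's: prove the nontrivial direction by pumping Player~$I$'s moves, use a type-based equivalence on $\SigmaI$-blocks, and invoke Lemma~\ref{lemma_runreplacement}. However, there is a genuine gap in your equivalence relation. You propose tracking, for each block of Player~$I$'s letters, a function $Q \to T_\aut$ giving ``the type of the run that block induces starting from that state.'' But a block over $\SigmaI$ alone does not determine a unique run: Player~$O$'s letters are not yet fixed, so from a given start state the block induces a \emph{set} of possible (end-state, type) pairs, one for each choice of Player~$O$'s response. The paper accordingly tracks functions $Q \to 2^{Q \times T_\aut}$ via the powerset transition~$\delta_P$, and it is the size of this function space, $2^{n\cdot |Q\times T_\aut|} = 2^{2n^4k^2}$, that produces the exponent in $f(0)$; your count $|T_\aut|^{n}$ does not match and, more importantly, a single-valued function $Q\to T_\aut$ is not even well-defined here.

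Relatedly, ``padding with equivalent infixes'' is too vague to carry the argument, and this is precisely where the set-valued tracking is used. The paper makes the padding precise via \emph{pumpable decompositions}: each length-$d$ block $\overline{a_i}$ is split as $x_i y_i z_i$ with $y_i$ nonempty and $\delta_P^+(\{(q,\init_\aut(q))\}, x_i) = \delta_P^+(\{(q,\init_\aut(q))\}, x_i y_i)$ for every $q$; existence is by pigeonhole on the $d$ prefixes. Player~$I$ then plays $x_i (y_i)^{h_i} z_i$ with $h_i$ large enough for $f'$. The point of pumpability is exactly the back-translation you flag as the obstacle: given Player~$O$'s response to the pumped block, the equality of $\delta_P^+$-images guarantees a short response $x_i'$ to $x_i$ alone that reaches the \emph{same} (state, type) as the long prefix $x_i(y_i)^{h_i-1}$ with its actual response; concatenating $x_i'$ with Player~$O$'s answers to the last $y_i$ and to $z_i$ yields a length-$d$ response in the bounded game whose run has the same type as the long run. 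This is what makes Lemma~\ref{lemma_runreplacement} applicable block by block. Your sketch identifies that such a back-translation is needed but does not supply the mechanism; without the set-valued equivalence and the pumpable decomposition, the ``projects to a play consistent with~$\stratI$'' step does not go through.
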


\begin{proof}
We only prove the non-trivial direction by taking a winning strategy for $\delaygame{L(\aut)}$ and pumping its moves to obtain a winning strategy  for $\delaygamep{L(\aut)}$, which might require Player~$I$ to provide more lookahead. 

To this end, let $\aut = (Q, \SigmaI \times \SigmaO, q_\initmark, \delta, \col, \cost)$, let $T_\aut$ be the set of types of $\aut$, and recall that $\init_\aut$ and $\update_\aut$ compute the type of a run as described in Remark~\ref{remark_tracking}. We extend $\aut$ so that it tracks the type of its runs using the state set~$Q \times T_\aut$. This information does not change the language of the automaton, but is useful when pumping the moves in $\delaygame{L(\aut)}$.

Formally, we define $\aut' = (Q', \SigmaI \times \SigmaO, q_\initmark', \delta', \col', \cost')$ with
\begin{itemize}
	\item $Q' = Q \times T_\aut$,
	\item $q_\initmark' = (q_\initmark, \init(q_\initmark))$ ,
	\item $\delta'((q, t),{a \choose b}) = (\delta(q,{a \choose b}),\update(t,{a \choose b}))$,
	\item $\col'(q,t) = \col(q)$, and
	\item $\cost'((q,t),{a \choose b},(q',t')) = \col(q,{a \choose b},q')$.
\end{itemize}

Now, define $\delta_P \colon 2^{Q'} \times \SigmaI \rightarrow 2^{Q'}$ via
\[
\delta_P (S, a) = \left\{\left.\delta'\left( (q,t), {a \choose b} \right) \right| (q,t) \in S \text{ and } b \in \SigmaO\right\}.\]
Intuitively, $\delta_P$ is obtained as follows: take $\aut'$, project away $\SigmaO$, and apply the power set construction (ignoring the coloring and the costs). Then, $\delta_P$ is the transition function of the resulting deterministic automaton.
As usual, we extend $\delta_P$ to $\delta_P^+ \colon 2^{Q'} \times \SigmaI^+ \rightarrow 2^{Q'}$ via $\delta_P^+(S,a) = \delta_P(S,a)$ and $\delta_P^+(S, wa) = \delta_P( \delta_P^+(S, w), a)$. 

\begin{remark}
\label{remark_powersetcharac}
The following are equivalent for $q \in Q$ and $w \in \SigmaI^+$:
\begin{enumerate}
	\item $(q',t') \in \delta_P^+(\set{(q, \init_\aut(q))},w)$. 
	\item There is a $w' \in (\SigmaI \times \SigmaO)^+ $ whose projection to $\SigmaI$ is $w$ such that the run of $\aut$ processing $w'$ from $q$ ends in $q'$ and has type~$t'$.
\end{enumerate}
\end{remark}

Let $\stratI$ be a winning strategy for Player~$I$ in $\delaygame{L(\aut)}$ and let $f'$ be an arbitrary delay function. We construct a winning strategy~$\stratI'$ for Player~$I$ in $\delaygamep{L(\aut)}$ by simulating a play of $\delaygamep{L(\aut)}$ by a play in $\delaygame{L(\aut)}$. For the sake of brevity, we denote $\delaygame{L(\aut)}$ by $\Gamma$ and $\delaygamep{L(\aut)}$ by $\Gamma'$ from now on.

Recall that we only consider automata~$\aut$ with non-trivial colorings. Thus, we can bound the number of $\aut$'s types by $2n^2k^2$, where $n$ is the number of states and $k$ the number of colors. 
 We define $d =  \size{(2^{Q'})^Q} = 2^{2n^4k^2}$. In the simulating play in $\Gamma$, the players make their moves in blocks of length~$d$: Player~$I$'s are denoted by $\overline{a_i}$ and Player~$O$'s by $\overline{b_i}$, i.e., in the following, every $\overline{a_i}$ is in $\SigmaI^d$ and every $\overline{b_i}$ is in $\SigmaO^d$. Furthermore, we say that a decomposition $\overline{a_i} = xyz$ is \emph{pumpable}, if $y$ is non-empty and if 
 \[
	\delta_P^+(\set{(q, \init_\aut(q))},x) = \delta_P^+(\set{(q, \init_\aut(q))},x y)
\]
for every $q \in Q$.  As $\aut$ is complete, $\delta_P^+(\set{(q, \init_\aut(q)),w})$ is always non-empty, which implies the following remark.

\begin{remark}
\label{rem_pumpdecomp}
Every $\overline{a} \in \SigmaI^d$ has a pumpable decomposition. 
\end{remark}

Now, we begin the construction of $\stratI'$. Note that we have $f(0) = 2d$. Thus, let $\stratI(\epsilon) = \overline{a_0}\overline{a_1}$ be the first move of Player~$I$ in $\Gamma$ according to $\stratI$. Remark~\ref{rem_pumpdecomp}  yields pumpable decompositions~$\overline{a_0} = x_0 y_0 z_0$ and $\overline{a_1} = x_1 y_1 z_1$. We pick $h_0>0$ such that $\size{x_0 (y_0)^{h_0} z_0} \ge f'(0)$ and define $\alpha(0) \cdots \alpha(\ell_0-1) = x_0 (y_0)^{h_0} z_0$. Similarly, we pick $h_1 >0$ such that $\size{x_1 (y_1)^{h_1} z_1} \ge \sum_{j = 1}^{\ell_0-1}f'(j)$ and define $\alpha(\ell_0) \cdots \alpha(\ell_1-1) = x_1 (y_1)^{h_1} z_1$. Now, we define the strategy~$\stratI'$ for Player~$I$ in $\Gamma'$ to pick the prefix of length~$\sum_{j=0}^{\ell_0-1}f'(j)$ of $\alpha(0) \cdots \alpha(\ell_1-1)$ during the first $\ell_0$ rounds, independently of the choices of Player~$O$.
This prefix is well-defined by the choices of $h_0$ and $h_1$. The remaining letters of $\alpha(0) \cdots \alpha(\ell_1-1)$ are stored in a buffer~$\buffer_1$.  During these first $\ell_0$ rounds, Player~$O$ answers by producing some $\beta(0) \cdots \beta(\ell_0-1)$. 

Thus, we are in the following situation for $i=1$ (see the solid part of Figure~\ref{fig_equivalenceproof}).
\begin{itemize}

	\item In $\Gamma$, Player~$I$ has picked $\overline{a_0}, \ldots,\overline{a_i}$ such that for every $j \le i$: $x_j y_j z_j$ is a pumpable decomposition of $\overline{a_j}$. Furthermore, Player~$O$ has picked $\overline{b_0}, \ldots,\overline{b_{i-2}}$.

	\item In $\Gamma'$, Player~$I$ has picked the prefix of length~$\sum_{j=0}^{\ell_{i-1}-1}f'(j) $ of
	\[
	\alpha(0) \cdots \alpha(\ell_i -1) = x_0 (y_0)^{h_0} z_0 \cdots x_i (y_i)^{h_i} z_i 
	\]
	while the remaining suffix is the buffer~$\buffer_{i}$. Player~$O$ has picked $\beta(0) \cdots \beta(\ell_{i-1}-1)$.

\end{itemize}

\begin{figure*}
\centering
\scalebox{1.2}{
\begin{tikzpicture}[thick, yscale = 1, shorten >=0pt]

\node at (0,2.2) {$\Gamma$};
\node at (.7,2.7) {$I\!\!:$};
\node at (.7,1.7) {$O\!\!:$};

\node at (0,-.5) {$\Gamma'$};
\node at (.7,0) {$I\!\!:$};
\node at (.7,-1) {$O\!\!:$};

\draw[black!30, dashed] (0,1.1) -- (13, 1.1);
\path[black!30, very thin] 
(6.5,-.95) edge[bend left = 9] (3.5,1.3)
(7.5,-.95) edge[bend left = 9] (4.5,1.3)
(2,2.1) edge[bend right = 9] (3, -.45);


\def\x{2}
\def\y{2.7}
\def\step{i-1}
\draw (\x,\y) -- (\x+3,\y);
\draw (\x,\y-.15) -- (\x,\y+.15);
\draw (\x+1,\y-.1) -- (\x+1,\y+.1);
\draw (\x+2,\y-.1) -- (\x+2,\y+.1);
\draw (\x+3,\y-.15) -- (\x+3,\y+.15);

\draw[thin] (\x-1, \y) -- (\x+.1,\y);
\draw[thin] (\x-1, \y-.15) -- (\x-1, \y+.15);

\node at (\x+.5,\y+.2) {\footnotesize $x_{\step}$};
\node at (\x+1.5,\y+.2) {\footnotesize $y_{\step}$};
\node at (\x+2.5,\y+.2) {\footnotesize $z_{\step}$};

\def\x{5}
\def\y{2.7}
\def\step{i}
\draw (\x,\y) -- (\x+3,\y);
\draw (\x,\y-.1) -- (\x,\y+.1);
\draw (\x+1,\y-.1) -- (\x+1,\y+.1);
\draw (\x+2,\y-.1) -- (\x+2,\y+.1);
\draw (\x+3,\y-.15) -- (\x+3,\y+.15);

\node at (\x+.5,\y+.2) {\footnotesize $x_{\step}$};
\node at (\x+1.5,\y+.2) {\footnotesize $y_{\step}$};
\node at (\x+2.5,\y+.2) {\footnotesize $z_{\step}$};


\begin{scope}[color = black]

\def\x{2}
\def\y{1.7}
\def\step{i-1}
\draw[dotted] (\x,\y) -- (\x+3,\y);
\draw (\x,\y-.15) -- (\x,\y+.15);
\draw (\x+1,\y-.1) -- (\x+1,\y+.1);
\draw (\x+2,\y-.1) -- (\x+2,\y+.1);
\draw (\x+3,\y-.15) -- (\x+3,\y+.15);

\draw[thin, black] (\x-1, \y) -- (\x-.025,\y);
\draw[thin] (\x-1, \y-.15) -- (\x-1, \y+.15);

\node at (\x+.5,\y-.15) {\footnotesize $x_{\step}'$};
\node at (\x+1.5,\y-.15) {\footnotesize $y_{\step}'$};
\node at (\x+2.5,\y-.15) {\footnotesize $z_{\step}'$};

\end{scope}

\draw[decorate, decoration={snake, segment length=.3cm, amplitude=.05cm}] (1,2.2) -- (2,2.2);
\draw[decorate, decoration={snake, segment length=.3cm, amplitude=.05cm}] (2,2.2) -- (3,2.2);
\node[draw=white,fill=white,inner sep =-.2] (start) at (1,2.2) {\scriptsize $q_{\initmark}$};
\node[draw=white,fill=white,inner sep =-.2] (start) at (2,2.2) {\scriptsize $q_{i-1}$};
\node[draw=white,fill=white,inner sep =-.2] (middle) at (3,2.2) {\scriptsize $q_{i-1}^*$};


\def\x{3}
\def\y{0}
\def\step{i-1}
\draw (\x,\y) -- (\x+2,\y);
\draw[dashed] (\x+2,\y) -- (\x+3,\y);
\draw (\x+3,\y) -- (\x+5,\y);

\draw (\x,\y-.15) -- (\x,\y+.15);
\draw (\x+1,\y-.1) -- (\x+1,\y+.1);
\draw (\x+2,\y-.1) -- (\x+2,\y+.1);
\draw (\x+3,\y-.1) -- (\x+3,\y+.1);
\draw (\x+4,\y-.1) -- (\x+4,\y+.1);
\draw (\x+5,\y-.15) -- (\x+5,\y+.15);

\draw[thin] (\x-2, \y) -- (\x,\y);
\draw[thin] (\x-2, \y-.15) -- (\x-2, \y+.15);

\node at (\x+.5,\y+.2) {\footnotesize $x_{\step}$};
\node at (\x+1.5,\y+.2) {\footnotesize $y_{\step}$};
\node at (\x+2.5,\y+.2) {\footnotesize $\cdots$};
\node at (\x+3.5,\y+.2) {\footnotesize $y_{\step}$};
\node at (\x+4.5,\y+.2) {\footnotesize $z_{\step}$};

\def\x{8}
\def\y{0}
\def\step{i}
\draw (\x,\y) -- (\x+2,\y);
\draw[dashed] (\x+2,\y) -- (\x+3,\y);
\draw (\x+3,\y) -- (\x+5,\y);

\draw (\x,\y-.15) -- (\x,\y+.15);
\draw (\x+1,\y-.1) -- (\x+1,\y+.1);
\draw (\x+2,\y-.1) -- (\x+2,\y+.1);
\draw (\x+3,\y-.1) -- (\x+3,\y+.1);
\draw (\x+4,\y-.1) -- (\x+4,\y+.1);
\draw (\x+5,\y-.15) -- (\x+5,\y+.15);

\node at (\x+.5,\y+.2) {\footnotesize $x_{\step}$};
\node at (\x+1.5,\y+.2) {\footnotesize $y_{\step}$};
\node at (\x+2.5,\y+.2) {\footnotesize $\cdots$};
\node at (\x+3.5,\y+.2) {\footnotesize $y_{\step}$};
\node at (\x+4.5,\y+.2) {\footnotesize $z_{\step}$};


\begin{scope}[color = black]

\def\x{3}
\def\y{-1}
\def\step{i-1}
\draw (\x,\y) -- (\x+2,\y);
\draw[dashed] (\x+2,\y) -- (\x+3,\y);
\draw (\x+3,\y) -- (\x+5,\y);

\draw (\x,\y-.15) -- (\x,\y+.15);
\draw (\x+1,\y-.1) -- (\x+1,\y+.1);
\draw (\x+2,\y-.1) -- (\x+2,\y+.1);
\draw (\x+3,\y-.1) -- (\x+3,\y+.1);
\draw (\x+4,\y-.1) -- (\x+4,\y+.1);
\draw (\x+5,\y-.15) -- (\x+5,\y+.15);

\draw[very thin, black] (\x-2, \y) -- (\x,\y);
\draw[thin] (\x-2, \y-.15) -- (\x-2, \y+.15);

\node at (\x+3.5,\y-.2) {\footnotesize $y_{\step}'$};
\node at (\x+4.5,\y-.2) {\footnotesize $z_{\step}'$};
\end{scope}

\draw[decorate, decoration={snake, segment length=.3cm, amplitude=.05cm}] (3,-.5) -- (6,-.5);

\node[draw=white,fill=white,inner sep =-.2] (startprime) at (3,-.5) {\scriptsize $q_{i-1}$};
\node[draw=white,fill=white,inner sep =-.2] (middleprime) at (6,-.5) {\scriptsize $q_{i-1}^*$};

\node at (1,.7) {\footnotesize $0$};
\node at (3,.7) {\footnotesize $\ell_{i-2}$};
\node at (8,.7) {\footnotesize $\ell_{i-1}$};
\node at (13,.7) {\footnotesize $\ell_{i}$};

\end{tikzpicture}	
}
\caption{\emph{The situation} (in solid lines): in $\Gamma$, Player~$I$ has picked $\overline{a_0}, \ldots,\overline{a_i}$, Player~$O$ has picked $\overline{b_0}, \ldots,\overline{b_{i-2}}$ (hidden in the thin part at the beginning), and $q_{i-1}$ is the state of $\aut$ reached when processing ${\overline{a_0} \choose \overline{b_0}} \cdots {\overline{a_{i-2}} \choose \overline{b_{i-2}}}$ (denoted by the curly line).  In $\Gamma'$, Player~$I$ has repeated $y_i$ sufficiently often so that Player~$O$ has provided an answer to $x_{i-1}y_{i-1}\cdots y_{i-1}z_{i-1}$, i.e., up to position~$\ell_{i-1}-1$, and $q_{i-1}^*$ is the state reached when processing $x_{i-1}$ and all but the last copy of $y_{i-1}$ (and the corresponding answers of Player~$O$) starting in $q_{i-1}$. \newline
By construction, there is an answer~$x_{i-1}'$ to $x_{i-1}$ such that processing ${x_{i-1} \choose x_{i-1}'}$ from $q_{i-1}$ brings $\aut$ to $q_{i-1}^*$ as well. The block~$b_{i-1}$ (dotted) is the concatenation of $x_{i-1}'$, and $y_{i-1}'z_{i-1}'$ from $\Gamma'$.}
\label{fig_equivalenceproof}
\end{figure*}

Now, let $i>0$ be arbitrary and let $q_{i-1}$ be the state reached by $\aut$ when processing ${ \overline{a_0} \choose \overline{b_0} } \cdots { \overline{a_{i-2}} \choose \overline{b_{i-2}} }$ from $q_\initmark$.  Furthermore, let $q_{i-1}^*$ be the state reached by $\aut$ when processing $x_{i-1}(y_{i-1})^{h_{i-1}-1}$ and the corresponding part of $\beta(0) \cdots \beta(\ell_{i-1}-1)$ starting in $q_{i-1}$, and let $t_{i-1}^*$ be the type of the run. Then, due to Remark~\ref{remark_powersetcharac} and the decomposition~$x_{i-1}y_{i-1}z_{i-1}$ being pumpable, there is a word~$x_{i-1}' \in \SigmaO^{\size{x_i-1}}$ such that $\aut$ reaches the same state~$q_{i-1}^*$ when processing ${x_{i-1} \choose x'_{i-1}}$ starting in $q_{i-1}$, and the run has type~$t_{i-1}^*$ as well. Now, we define $\overline{b_{i-1}} = x_{i-1}' y_{i-1}' z_{i-1}'$ where $y_{i-1}'$ and $z_{i-1}'$ are the letters picked by Player~$O$ at the positions of the last repetition of $y_{i_1}$ and at the positions of $z_{i_1}$, respectively (see Figure~\ref{fig_equivalenceproof}).

Using $\overline{b_{i-1}}$ we continue the simulation in $\Gamma$ by letting Player~$O$ pick the letters of $\overline{b_{i-1}}$ during the next $d$ rounds, which yields $d$ moves for Player~$I$ by applying $\stratI$. Call this sequence of letters~$\overline{a_{i+1}}$, which again has a pumpable decomposition~$x_{i+1} y_{i+1} z_{i+1}$. We again pick $h_{i+1} >0$ such that $\size{x_{i+1} (y_{i+1})^{h_{i+1}} z_{i+1}} \ge \sum_{j = \ell_{i-1}}^{\ell_{i}-1}f'(j)$ and define $\alpha(\ell_i) \cdots \alpha(\ell_{i+1}-1) = x_{i+1} (y_{i+1})^{h_{i+1}} z_{i+1}$. The strategy~$\stratI'$ for Player~$I$ in $\Gamma'$ is defined so that it picks the prefix of length~$\sum_{j=\ell_{i-1}}^{\ell_{i}-1}f'(j)$ of $\buffer_{i}\alpha(\ell_i ) \cdots \alpha(\ell_{i+1}-1)$ during the next $\ell_i$ rounds, independently of the choices of Player~$O$, and the remaining letters are stored in the buffer~$\buffer_{i+1}$. The prefix is again well-defined by the choice of $h_{i+1}$. Hence, during these rounds, Player~$O$ answers by producing $\beta(\ell_{i-1} ) \cdots \beta(\ell_i-1)$. Then, we are again in the situation described above for $i+1$.

We conclude by showing that $\stratI'$ is indeed a winning strategy for Player~$I$ in $\Gamma'$. To this end, let $w' = {\alpha \choose \beta}$ be an outcome of a play that is consistent with $\stratI'$ and let $w = { \overline{a_0} \choose \overline{b_0} }{ \overline{a_1} \choose \overline{b_1} }{ \overline{a_2} \choose \overline{b_2} } \cdots $ be the play in $\Gamma$ constructed during the simulation as described above, which is consistent with $\stratI$ and therefore winning for Player~$I$. Hence, the run of $\aut$ on $w'$ is rejecting.

A simple induction shows that $\aut$ reaches the same state when processing\[
{\alpha(0) \cdots \alpha(\ell_i -1) \choose \beta(0) \cdots \beta(\ell_i -1)  } \qquad\qquad \text{and} \qquad\qquad { \overline{a_0}\cdots \overline{a_i} \choose \overline{a_b}\cdots \overline{b_i}}, 
\]
call it $q_i$. By construction, the run of $\aut$ starting in $q_i$ processing ${\alpha(\ell_{i-1}) \cdots \alpha_(\ell_i-1) \choose \beta(\ell_{i-1}) \cdots \beta(\ell_i-1)}$ (using $\ell_{-1} = 0$) and the one starting in $q_i$ processing ${\overline{a_i} \choose \overline{b_i} }$ have the same type. The runs on the ${\overline{a_i} \choose \overline{b_i} }$  all have length~$d$, hence applying Lemma~\ref{lemma_runreplacement} to the rejecting run of $\aut$ on $w$ shows that the run of $\aut$ on $w'$ is rejecting as well. Thus, the outcome is winning for Player~$I$ and $\stratI$ is indeed a winning strategy for Player~$I$ in $\Gamma'$. 
 \end{proof}

Applying both directions of the equivalence proved in Theorem~\ref{thm_constantsuffices} and determinacy yields an upper bound on the necessary lookahead for Player~$O$.

\begin{corollary}
\label{cor_upperboundlookahead}
Let $L$ be recognized by a parity automaton with costs~$\aut$ with $n$ states and $k$ colors. If Player~$O$ wins $\delaygame{L}$ for some delay function~$f$, then also for the constant delay function~$f$ with $f(0) = 2^{2n^4k^2+1}$.
\end{corollary}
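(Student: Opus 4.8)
The plan is to obtain the corollary as an immediate logical consequence of Theorem~\ref{thm_constantsuffices}, using the determinacy of delay games with winning conditions given by parity automata with costs recorded at the end of Section~\ref{subsec_delaygames}. All of the substantive work---the pumping construction turning a winning strategy for Player~$I$ under constant lookahead into one under arbitrary lookahead---is already carried out in the theorem, so the corollary itself requires only a short deduction.

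First I would disambiguate the two roles of the symbol $f$ in the statement. Let $f$ be the constant delay function with $f(0) = 2^{2n^4k^2+1}$, as in Theorem~\ref{thm_constantsuffices}, and let $f'$ be the delay function for which Player~$O$ is assumed to win $\delaygamep{L}$. The task is then to show that Player~$O$ also wins $\delaygame{L}$.

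I would proceed by contraposition on the equivalence, applying determinacy twice. Since Player~$O$ wins $\delaygamep{L}$ and this game is determined, Player~$I$ does not win $\delaygamep{L}$; hence item~(2) of Theorem~\ref{thm_constantsuffices} fails, with $f'$ witnessing that Player~$I$ does not win for \emph{every} delay function. By the non-trivial direction (1)$\Rightarrow$(2) of the theorem, read contrapositively, item~(1) must fail as well, i.e., Player~$I$ does not win $\delaygame{L}$. Invoking determinacy a second time---now for the constant-lookahead game $\delaygame{L}$---gives that Player~$O$ wins $\delaygame{L}$, which is the claim.

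I do not expect a genuine obstacle here, as the difficulty is confined to the theorem. The only points demanding care are logical: one must use the equivalence in the correct direction (the contrapositive of (1)$\Rightarrow$(2), which is exactly the direction established by the pumping argument), verify that determinacy applies to both games (it does, since both share the same Borel winning condition $L(\aut)$), and check that the bound in the corollary matches that of the theorem verbatim, so that item~(1) refers to precisely the game about which we conclude.
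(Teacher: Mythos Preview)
Your proposal is correct and matches the paper's own argument, which derives the corollary from Theorem~\ref{thm_constantsuffices} together with determinacy. The paper's text speaks of ``both directions of the equivalence'', but your observation that only the contrapositive of the non-trivial implication (1)$\Rightarrow$(2) is actually needed (applied with determinacy on each side) is accurate and slightly sharper.
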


This upper bound can be slightly improved to $2^{2n^3k^2+1}$ by a more careful analysis: if $(q',(q_0,q_1,c_0, c_1, \ell)) \in \delta_P^+(\set{(q, \init_\aut(q))},w)$ for some $q \in Q$ and some $w \in \SigmaI^+$, then we have $q' = q_1$. Hence, not all states of $\aut'$ have to be considered when looking for a decomposition of some $\overline{a_i} $ into $ x_i y_i z_i$. 

Finally, the upper bound of Corollary~\ref{cor_upperboundlookahead} is asymptotically tight due to Proposition~\ref{prop_lowerbounds}, which is a generalization of the corresponding lower bound for delay games with winning conditions given by deterministic safety automata~\cite{KleinZimmermann16}.
\section{Determining the Winner}
\label{sec_solution}
The main result of this section is that the following problem is $\exptime$-complete: given a parity automaton with costs~$\aut$, does Player~$O$ win $\delaygame{L(\aut)}$ for some $f$? Hardness already holds for the special case of safety automata, thus we focus our attention on membership. To this end, we revisit the analogous result for delay games with classical parity conditions (i.e., without costs)~\cite{KleinZimmermann16}: such games are reduced to equivalent delay-free parity games of exponential size, which can be solved in exponential time (in the size of the original parity automaton). Here, we extend this proof to automata with costs while simplifying its structure. Furthermore, we obtain an exponential upper bound on the cost of a winning strategy for Player~$O$.

For the remainder of this section, fix $\aut = (Q, \SigmaI \times \SigmaO, q_\initmark, \delta, \col, \cost)$, let $\aut' = (Q', \SigmaI \times \SigmaO, q_\initmark', \delta', \col', \cost')$ and $\delta_P^+ \colon 2^{Q'} \times \SigmaI \rightarrow 2^{Q'}$ be defined as in Section~\ref{sec_constant}, and recall that $Q' = Q \times T_\aut$, where $T_\aut$ is the set of types of runs of $\aut$.

Given $x \in \SigmaI^+$, we define the function~$r_x \colon Q \rightarrow 2^{Q'}$ via
\[
r_x(q) = \delta_P^+(\set{(q, \init_\aut(q))},x).
\]
Now, we define $x \equiv_\aut x'$ if, and only if, $r_x = r_{x'}$, which is a finite equivalence relation. Furthermore, we can assign to every $\equiv_\aut$ equivalence class~$S$ a function~$r_S$ from $Q$ to $2^{Q'}$, i.e., $r_S = r_x$ for all $x \in S$, which is independent of representatives. Finally, let $\R$ denote the set of $\equiv_\aut$ equivalence classes of words in $\SigmaI^{2d}$, where $d = 2^{2n^4k^2}$ as before.

Next, we construct a delay-free game~$\game(\aut)$ between Player~$I$ and Player~$O$ that is won by Player~$O$ if, and only if, she wins $\delaygame{L(\aut)}$ for some delay function~$f$. The game~$\game(\aut)$ is a zero-sum infinite-duration two-player game of perfect information played in rounds~$i = 0,1,2, \ldots$. Intuitively, Player~$I$ picks a sequence~$S_0 S_1 S_2 \cdots$ of equivalence classes from $\R$, which induces an infinite word~$\alpha$ over $\SigmaI$ by picking representatives. Player~$O$ implicitly picks an infinite word over $\SigmaO$ by constructing a run of $\aut$ on a word over $\SigmaI \times \SigmaO$ whose projection to $\SigmaI$ is $\alpha$. She wins, if the run is accepting. To account for the delay, she is always one move behind. 

Formally, in round~$0$, Player~$I$ picks an equivalence class~$S_0 \in \R$ and then Player~$O$ has to pick $(q_0, t_0) = q_\initmark'$, the initial state of $\aut'$.\footnote{This move is trivial, but we add it to keep the definition consistent.} In round~$i>0$, Player~$I$ picks an equivalence class $S_i \in \R$ and then Player~$O$ picks a state~$(q_i,t_i) \in r_{S_{i-1}}(q_{i-1})$ (due to completeness of $\aut$, Player~$O$ always has an available move).

For every $t \in T_\aut$ such that there is some run of $\aut$ of type~$t$, fix one such run~$\rho_t$. Now, consider a play~$\pi = S_0 (q_0, t_0) S_1 (q_1, t_1) S_2 (q_2, t_2) \cdots $ of $\game(\aut)$. By construction, $\rho_{t_{i+1}}$ starts in $q_i$ and ends in $q_{i+1}$, for every $i \ge 0$. The play~$\pi$ is winning for Player~$O$ if, and only if, the run $\rho_{t_1}\rho_{t_2}\rho_{t_3} \cdots$ of $\aut$ is accepting (note that $t_0$ is disregarded). As the length of the representatives~$\rho_{t_i}$ is bounded (there are only finitely many), Corollary~\ref{corollary_runreplacement} implies that the winner is independent of the choice of the representatives.  

A strategy for Player~$I$ in $\game(\aut)$ is a mapping~$\stratI \colon (\R \cdot Q')^* \rightarrow \R$ while a strategy for Player~$O$ is a mapping~$\stratO \colon (\R \cdot Q')^* \cdot \R \rightarrow Q'$ that has to satisfy $\stratO(S_0) = q_\initmark'$ for every $S_0 \in \R$ and $\stratO(S_0 \cdots (q_i, t_i)S_{i+1}) \in r_{S_i}(q_i)$ for all $S_0 \cdots (q_i, t_i)S_{i+1} \in (\R \cdot Q')^+ \cdot \R$. A play $S_0 (q_0, t_0) S_1 (q_1, t_1) S_2 (q_2, t_2) \cdots$ is consistent with $\stratI$ if $S_i = \stratI(S_0 \cdots (q_{i-1}, t_{i-1}))$ for every $i$, and it is consistent with $\stratO$, if $(q_i, t_i) = \stratO(S_0 \cdots S_i)$ for every $i$. A strategy is winning for a Player~$P \in \set{I,O}$, if every play that is consistent with the strategy is won by Player~$P$. As usual, we say that Player~$P$ wins $\game(\aut)$, if she has a winning strategy. 

\begin{lemma}
\label{lemma_equivalence}
Player~$O$ wins $\game(\aut)$ if, and only if, she wins $\delaygame{L(\aut)}$ for some $f$.
\end{lemma}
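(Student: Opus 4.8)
The plan is to establish the equivalence in Lemma~\ref{lemma_equivalence} by transferring strategies in both directions between the abstract game~$\game(\aut)$ and the concrete delay game~$\delaygame{L(\aut)}$, using the equivalence classes in~$\R$ as the bridge. The key observation is that an equivalence class~$S \in \R$ records, via $r_S$, exactly which states-with-types of $\aut'$ are reachable from a given state when Player~$I$ commits to a block of $2d$ input letters; this is precisely the information a move in $\game(\aut)$ abstracts away from a concrete block of lookahead. By Corollary~\ref{cor_upperboundlookahead}, it suffices to consider the constant delay function~$f$ with $f(0)=2^{2n^4k^2+1}=2d$, so that Player~$I$'s moves naturally group into blocks of length~$2d$, each of which falls into some class in~$\R$.

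For the forward direction ($\game(\aut) \Rightarrow \delaygame{L(\aut)}$), I would take a winning strategy~$\stratO$ for Player~$O$ in $\game(\aut)$ and simulate a play of~$\delaygame{L(\aut)}$ against it. Whenever Player~$I$ in the delay game has provided a full block~$x_i \in \SigmaI^{2d}$ of lookahead, I pass its equivalence class~$S_i = [x_i]_{\equiv_\aut}$ to $\stratO$, obtaining a target state-with-type~$(q_i,t_i) \in r_{S_{i-1}}(q_{i-1})$. By Remark~\ref{remark_powersetcharac}, membership in $r_{S_{i-1}}(q_{i-1}) = \delta_P^+(\set{(q_{i-1},\init_\aut(q_{i-1}))},x_{i-1})$ guarantees the existence of a concrete answer~$v_{i-1} \in \SigmaO^{2d}$ to the block~$x_{i-1}$ such that $\aut$ processing ${x_{i-1} \choose v_{i-1}}$ from $q_{i-1}$ reaches $q_i$ via a run of type~$t_i$. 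Player~$O$ plays these $v_{i-1}$ in the delay game; the one-block offset between Player~$I$'s input and Player~$O$'s output is exactly the ``one move behind'' delay built into~$\game(\aut)$, and the initial lookahead of~$2d$ covers the first block. Since the run of $\aut$ on the delay-game outcome is, block by block, a concatenation of runs of types~$t_1 t_2 t_3 \cdots$ matching $\rho_{t_1}\rho_{t_2}\rho_{t_3}\cdots$ in type, and the $\rho_{t_i}$ are all of bounded length, Corollary~\ref{corollary_runreplacement} shows it is accepting precisely because the play~$\pi$ in $\game(\aut)$ is won by Player~$O$.

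For the converse ($\delaygame{L(\aut)} \Rightarrow \game(\aut)$), I would start from a winning strategy~$\stratO$ for Player~$O$ in $\delaygame{L(\aut)}$, which by Corollary~\ref{cor_upperboundlookahead} exists for the constant function~$f$ with $f(0)=2d$. Given a play of~$\game(\aut)$ in which Player~$I$ produces classes~$S_0 S_1 S_2 \cdots$, I choose concrete representatives~$x_i \in S_i$ and feed them block-by-block to~$\stratO$, reading off Player~$O$'s concrete answers; the state-with-type~$(q_i,t_i)$ reached after each block is a legal move in~$\game(\aut)$, again by Remark~\ref{remark_powersetcharac}. Acceptance transfers back via Corollary~\ref{corollary_runreplacement}, since the concrete run and the representative run~$\rho_{t_1}\rho_{t_2}\cdots$ agree in type block-by-block and both have bounded-length blocks. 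The main obstacle, and the point requiring the most care, is the bookkeeping of the one-block delay together with the independence from the choice of representatives: I must verify that the strategy constructed in $\game(\aut)$ genuinely depends only on the equivalence classes~$S_i$ (not on the particular representatives~$x_i$), which is exactly what the definition of~$\equiv_\aut$ via $r_x = r_{x'}$ guarantees, and that the offset in the simulation never requires Player~$O$ to answer a block before the corresponding lookahead has been provided. Once this alignment is set up correctly, both directions reduce to an application of Corollary~\ref{corollary_runreplacement}.
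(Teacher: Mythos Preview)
Your overall approach matches the paper's: simulate the abstract game by the delay game and vice versa, using Remark~\ref{remark_powersetcharac} to pass between equivalence classes and concrete blocks, and Corollary~\ref{corollary_runreplacement} to transfer acceptance. One point deserves correction, however.

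In your direction $\game(\aut) \Rightarrow \delaygame{L(\aut)}$ you use the constant delay function with $f(0)=2d$. That is not enough. In $\game(\aut)$, Player~$O$'s move~$(q_i,t_i)\in r_{S_{i-1}}(q_{i-1})$ is taken \emph{after} Player~$I$ has revealed $S_i$, so the abstract strategy's choice for block~$i-1$ may depend on $S_0,\ldots,S_i$. Hence in the delay game Player~$O$ must have seen the concrete blocks~$x_0,\ldots,x_i$ before she outputs a single letter of~$v_{i-1}$. With $f(0)=2d$ she is forced to emit~$\beta(0)$ (the first letter of~$v_0$) already in round~$0$, having seen only~$x_0$; at that point $S_1$ is unknown and $(q_1,t_1)$ cannot be computed. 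Your sentence ``the initial lookahead of~$2d$ covers the first block'' is precisely the step that fails: it needs to cover \emph{two} blocks. The paper fixes this by taking $f(0)=4d$ for this direction, so that in round~$0$ Player~$I$ supplies both $x_0$ and $x_1$, after which the simulation runs with the one-block lag you describe. (For the converse direction, $f(0)=2d$ is indeed the right choice, via Corollary~\ref{cor_upperboundlookahead}.)

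A smaller remark: you need not verify that the strategy you build in $\game(\aut)$ is independent of the representatives~$x_i$. Fixing, once and for all, a representative for each class in~$\R$ yields a perfectly well-defined strategy on $\game(\aut)$; what matters is only that it is winning, and that follows from Corollary~\ref{corollary_runreplacement} exactly as you say.
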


\begin{proof}
For the sake of simplicity, we denote $\game(\aut)$ by $\game$ and $\delaygame{L(\aut)}$ by $\Gamma$, provided $f$ is clear from context.

First, let Player~$O$ win $\delaygame{L(\aut)}$ for some $f$. Then, due to Theorem~\ref{thm_constantsuffices}, she also wins $\Gamma = \delaygame{L(\aut)}$ for the constant delay function~$f$ with $f(0) = 2d$ (recall $d = 2^{2n^4k^2}$). Thus, fix this $f$ and let $\stratO$ be a winning strategy for Player~$O$ in $\Gamma = \delaygame{L(\aut)}$. We construct a winning strategy~$\stratO'$ for Player~$O$ in $\game$ by simulating a play in $\game$ by a play in $\Gamma$. 

Thus, let $S_0 \in \R$ be a first move of Player~$I$ in $\game$. In round~$0$, Player~$O$ has to pick $(q_0, t_0) = q_\initmark'$. Hence, we define $\stratO'(S_0) = q_\initmark'$, independently of the pick $S_0$ by Player~$I$. Now, let $S_1 \in \R$ be the second move of Player~$I$ in reaction to Player~$O$ picking $(q_0, t_0)$. By $S_0, S_1 \in \R$, there are words~$\alpha(0) \cdots \alpha(f(0)-1) \in S_0$ and $\alpha(f(0)) \cdots \alpha(2f(0)-1) \in S_1$, both of length~$f(0)$. We simulate the play prefix~$S_0 q_0 S_1$ of $\game$ in $\Gamma$ by letting Player~$I$ pick $\alpha(0) \cdots \alpha(f(0)-1)$ in round~$0$ of $\Gamma$ and the letters of $\alpha(f(0)) \cdots \alpha(2f(0)-1)$ during the next $f(0)$ rounds. Applying the winning strategy~$\stratO$ for Player~$O$ in $\Gamma$ to these moves yields $f(0) +1$ letters~$\beta(0) \cdots \beta(f(0))$.

Then, we are in the following situation for $i =1$:

\begin{itemize}
	\item In $\game$, we have a play prefix~$S_0 (q_0, t_0) S_1 \cdots (q_{i-1}, t_{i-1}) S_1$, and
	\item in $\Gamma$, Player~$I$ has picked $\alpha(0) \cdots \alpha((i+1)\cdot f(0)-1)$ during the first $1 + i\cdot f(0)$ rounds while Player~$O$ has picked $\beta(0) \cdots \beta(i\cdot f(0))$. 
\end{itemize}

Now, let $i>0$ be arbitrary and let $q_{i}$ be the state reached by $\aut$ when processing 
\[
{ \alpha((i-1)\cdot f(0)) \cdots \alpha(i\cdot f(0)-1) \choose \beta((i-1)\cdot f(0)) \cdots \beta(i \cdot f(0)-1) }
\]
from $q_{i-1}$, and let $t_i$ be the type of the corresponding run. Then, we have $(q_i,t_i) \in r_{S_{i}}(q_{i-1})$ by construction and define $\stratO(S_0 (q_0, t_0) S_1 \cdots (q_{i-1}, t_{i-1}) S_i) = (q_i, t_i)$. This move is again answered by Player~$I$ in $\game$ by picking $S_{i+1} \in \R$, which induces $\alpha((i+1)\cdot f(0)) \cdots \alpha((i+2)\cdot f(0)-1) \in S_{i+1}$. We continue the simulation by letting Player~$I$ pick the letters of $\alpha((i+1)\cdot f(0)) \cdots \alpha((i+2)\cdot f(0)-1)$ during the next $f(0)$ rounds, which is again answered by letters~$\beta(i\cdot f(0)) \cdots \alpha((i+1)\cdot f(0)-1)$ according to $\stratO$. Thus, we are in the same situation as above for $i+1$, which concludes the definition of $\stratO'$.

It remains to show that $\stratO'$ is a winning strategy for Player~$O$ in $\game$: let $S_0 (q_0,t_0) S_1 (q_1, t_1) S_2 (q_2, t_2) \cdots $ be a play that is consistent with $\stratO'$ and let ${\alpha \choose \beta}$ be the outcome of the corresponding play in $\Gamma$ constructed during the simulation, which is consistent with $\stratO$. By construction, $t_{i+1}$ for $i \ge 0$ is the type of the run of $\aut$ on
\[{ \alpha((i-1)\cdot f(0)) \cdots \alpha(i\cdot f(0)-1) \choose \beta((i-1)\cdot f(0)) \cdots \beta(i\cdot f(0)-1) }\]
starting in $q_{i}$, which ends in $q_{i+1}$. We call this finite run~$\rho_i$. Hence, $\rho_0 \rho_1 \rho_2 \cdots$ is the run of $\aut$ on ${\alpha \choose \beta}$, which is accepting due to ${ \alpha \choose \beta }$ being the outcome of a play that is consistent with the winning strategy~$\stratO$. 

As each $\rho_i$ has length~$f(0)$, Corollary~\ref{corollary_runreplacement} is applicable to the runs~$\rho_0 \rho_1 \rho_2 \cdots$ and $\rho_{t_1} \rho_{t_2} \rho_{t_3} \cdots$: as the former run is accepting, the latter is as well. Hence, the play~$S_0 (q_0,t_0) S_1 (q_1, t_1) S_2 (q_2, t_2) \cdots $ of $\game$ is winning for Player~$O$. Thus, $\stratO$ is indeed a winning strategy for her in $\game$. 

For the other direction, let $\stratO'$ be a winning strategy for Player~$O$ in $\game$. We construct a winning strategy~$\stratO$ for Player~$O$ in $\Gamma = \delaygame{L(\aut)}$ for the unique constant delay function~$f$ with $f(0) = 4d$ by simulating a play of $\Gamma$ by a play of $\game$. 

Thus, let $\alpha(0) \cdots \alpha(f(0)-1)$ be the move of Player~$I$ in round~$0$ of $\Gamma$ and let $S_0 \in \R$ be the equivalence class of $\alpha(0) \cdots \alpha(2d-1)$ as well as $S_1$ the equivalence class of $\alpha(2d) \cdots \alpha(4d-1)$. Now, consider the following play prefix in $\game$: Player~$I$ picks $S_0$, then Player~$O$ picks $(q_0, t_0) = q_\initmark'$ according to $\stratO'$, then Player~$I$ picks $S_1$, and finally Player~$O$ picks $(q_1, t_1) = \stratO(S_0 (q_0, t_0) S_1)$ according to $\stratO'$. 

Then, we are in the following situation for $i = 1$:
\begin{itemize}
	\item In $\Gamma$, Player~$I$ has picked $\alpha(0) \cdots \alpha((i+1)\cdot (2d)-1)$ and Player~$O$ has picked $\beta(0) \cdots \beta((i-1)\cdot (2d)-1)$.
	\item In $\game$, we have a play prefix~$S_0 (q_0, t_0) S_1 \cdots (q_{i-1}, t_{i-1}) S_i (q_i, t_i)$.
\end{itemize}

Now, let $i>0$ be arbitrary. Due to $(q_i, t_i) \in r_{S_i}(q_{i-1})$ and $\alpha((i-1)\cdot (2d)) \cdots \alpha(i\cdot (2d)-1) \in S_{i-1}$, there is a word~$\beta((i-1)\cdot (2d)) \cdots \beta(i\cdot (2d)-1)$ such that the run of $\aut$ on 
\[
{\alpha((i-1)\cdot (2d)) \cdots \alpha(i\cdot (2d)-1) \choose \beta((i-1)\cdot (2d))\cdots  \beta(i\cdot (2d)-1)}
\]
starting in $q_{i-1}$ ends in $q_i$ and has type~$t_i$. We define $\stratO$ so that it picks the letters of $\beta((i-1)\cdot (2d))\cdots  \beta(i\cdot (2d)-1)$ during the $2d$ rounds $(i-1)\cdot (2d), \ldots, i\cdot (2d)-1$. During these rounds, Player~$I$ picks $\alpha((i+1)\cdot (2d)) \cdots \alpha((i+2)\cdot (2d)-1)$. Let $S_{i+1}$ be its equivalence class. 

Then, 
we continue the play prefix~$S_0 (q_0, t_0) S_1 \cdots (q_{i-1}, t_{i-1}) S_i (q_i, t_i)$ in $\game$ by letting Player~$I$ pick $S_{i+1}$ and by letting Player~$O$ pick $(q_{i+1}, t_{i+1}) = \stratO'(S_0 (q_0, t_0) S_1 \cdots (q_{i-1}, t_{i-1}) S_i (q_i, t_i) S_{i+1})$. Then, we are in the same situation as above for $i+1$, which concludes the definition of $\stratO$. 

Again, it remains to prove that $\stratO$ is indeed winning for Player~$O$ in $\Gamma$. To this end, let ${\alpha \choose \beta}$ be the outcome of a play that is consistent with $\stratO$ and let $S_0 (q_0,t_0) S_1 (q_1,t_1) S_2 (q_2,t_2) \cdots $ be the corresponding play constructed during the simulation, which is consistent with $\stratO'$. By construction, the run of $\aut$ on
\[
{\alpha((i-1)\cdot (2d)) \cdots \alpha(i\cdot (2d)-1) \choose \beta((i-1)\cdot (2d))\cdots  \beta(i\cdot (2d)-1)}
\]
starting in $q_{i-1}$ has type $t_i$ and ends in $q_i$ (for $i>0$). Call this run $\rho_i$, i.e., the run of $\aut$ on ${\alpha \choose \beta}$ is $\rho_1 \rho_2 \rho_3 \cdots$. 
Thus, Corollary~\ref{corollary_runreplacement} is applicable to the runs $\rho_1 \rho_2 \rho_3 \cdots$ and $\rho_{t_1} \rho_{t_2} \rho_{t_3} \cdots$: as the latter one is accepting due to $S_0 (q_0,t_0) S_1 (q_1,t_1) S_2 (q_2,t_2) \cdots $ being consistent with a winning strategy for Player~$O$ in $\game$, the former one is accepting as well. Hence, the outcome~${\alpha \choose \beta}$ is accepted by $\aut$, which implies that the corresponding play is winning for Player~$O$. Therefore, $\stratO$ is indeed a winning strategy for Player~$O$ in $\Gamma$. 
 \end{proof}

Now, we are able to state and prove our main theorem of this section. 

\begin{theorem}
\label{thm_decidability}
The following problem is $\exptime$-complete: given a parity automaton with costs~$\aut$, does Player~$O$ win $\delaygame{L(\aut)}$ for some $f$?
\end{theorem}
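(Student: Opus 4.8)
The plan is to prove $\exptime$-completeness by separately establishing the lower and upper bounds. The hardness is essentially free: as noted in the excerpt, safety conditions are a special case of both the finitary parity condition and the parity condition with costs, and determining the winner of delay games with safety winning conditions is already $\exptime$-hard~\cite{KleinZimmermann16}. Since a safety automaton can be expressed as a parity automaton with costs, the same reduction yields $\exptime$-hardness for our problem. Thus I would devote essentially all the work to membership in $\exptime$.

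For the upper bound, the key is to invoke Lemma~\ref{lemma_equivalence}, which reduces the question of whether Player~$O$ wins $\delaygame{L(\aut)}$ for some~$f$ to the question of whether Player~$O$ wins the delay-free game~$\game(\aut)$. So the task reduces to two things: bounding the size of $\game(\aut)$ and bounding the complexity of solving it. First I would bound the size. The arena of $\game(\aut)$ has positions built from the set~$\R$ of $\equiv_\aut$-equivalence classes of words in $\SigmaI^{2d}$, together with the states of $\aut'$, i.e.\ elements of $Q' = Q \times T_\aut$. The crucial observation is that an equivalence class is determined by its associated function~$r_S \colon Q \to 2^{Q'}$, so $\card{\R}$ is at most $\card{(2^{Q'})^Q} = 2^{\card{Q'}\cdot\card{Q}}$. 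Since $\card{T_\aut} \le 2n^2k^2$ (as computed in the proof of Theorem~\ref{thm_constantsuffices}), we get $\card{Q'} = \card{Q}\cdot\card{T_\aut} \le 2n^3k^2$ and hence $\card{\R} \le 2^{2n^4k^2} = d$, which is exponential in the size of~$\aut$. Consequently the vertex set of $\game(\aut)$ is of exponential size.

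Next I would argue that $\game(\aut)$ is itself (equivalent to) a parity game with costs of exponential size, and can therefore be solved in time polynomial in its size. The winning condition of $\game(\aut)$ asks whether the run~$\rho_{t_1}\rho_{t_2}\rho_{t_3}\cdots$ of $\aut$ obtained by concatenating the fixed representative runs~$\rho_{t_i}$ is accepting, i.e.\ satisfies the parity condition with costs of $\aut$. Since each $t_i$ records the state transition, the highest even color, the maximal unanswered request, and whether an increment occurs (the data of a type), this information suffices to decode the parity-with-costs acceptance condition as an edge-colored, edge-weighted condition directly on the finite arena of $\game(\aut)$. Thus $\game(\aut)$ is a (delay-free) parity game with costs whose arena has size exponential in $\card{\aut}$ and whose number of colors is at most~$k$. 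By the result that parity games with costs are solvable in polynomial time in the arena size and (quasi-)polynomial or better in the number of colors~\cite{MogaveroMS15,FijalkowZimmermann14}, we solve $\game(\aut)$ in time exponential in $\card{\aut}$. Combining the reduction of Lemma~\ref{lemma_equivalence} with this solution procedure places the problem in $\exptime$, matching the lower bound.

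The main obstacle I anticipate is the careful accounting in the middle step: verifying that the acceptance condition of $\game(\aut)$, which is phrased in terms of accepting runs obtained by gluing representative runs~$\rho_{t_i}$ of bounded length, genuinely reduces to a \emph{finite-memory} parity-with-costs condition on the arena of $\game(\aut)$, so that an off-the-shelf solver for parity games with costs applies. This requires making precise that the colors and increment-costs accumulated along each representative~$\rho_{t_i}$ can be folded into the edge that reads~$t_i$ (or into a bounded gadget replacing that edge), using the type information to track the cost-of-response correctly. Corollary~\ref{corollary_runreplacement} is what guarantees this folding is sound, since it ensures the winner is invariant under replacing each $\rho_{t_i}$ by any run of the same type of bounded length. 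Once that translation is pinned down, the size and complexity bounds follow routinely from the cardinality estimate on~$\R$.
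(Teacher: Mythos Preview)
Your proposal is correct and follows essentially the same approach as the paper: hardness from the safety case, membership via Lemma~\ref{lemma_equivalence} by modeling $\game(\aut)$ as an arena-based parity game with costs of exponential size (using the type information to assign colors and increment-edges) and then solving that game in exponential time. The paper makes explicit the gadget you anticipate---each move~$(q_i,t_i)$ with $t_i = (q,q',c_0,c_1,\ell)$ is expanded into two vertices colored~$c_0$ and~$c_1$ connected by an edge of cost~$\ell$---and notes that computing the set~$\R$ of equivalence classes in exponential time reuses the automata construction from the qualitative parity case, a point you should also address explicitly.
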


\begin{proof}
We focus on membership as $\exptime$-hardness already holds for safety automata~\cite{KleinZimmermann16}.
To this end, we show how to model the abstract game~$\game(\aut)$ as an arena-based parity game with costs~\cite{FijalkowZimmermann14} of exponential size with at most one more color than $\aut$. This game can be constructed (argued below) and solved in exponential time (in the size of $\aut$)~\cite{FijalkowZimmermann14}. Lemma~\ref{lemma_equivalence} shows that solving this game yields the correct answer. Hence, the problem is in $\exptime$.

Intuitively, the arena encodes the rules of $\game(\aut)$: the players pick equivalence classes from $\R$ and states from $Q'$ in alternation. The restrictions on the states that may be picked are enforced by storing the last equivalence class and the last state in the vertices of the arena. Finally, to encode the winning condition of $\game(\aut)$, we simulate the effect of a run of type~$t = (q,q', c, c', \ell)$ every time Player~$O$ picks a state~$(q,t)$. Recall that $c$ encodes the largest answer, $c'$ the largest unanswered request, and $\ell$ whether the overall cost is zero or greater than zero. The effect is simulated by first visiting a state of color~$c$, then one of color~$c'$, and equipping the edge between these vertices with cost~$\ell$. Afterwards, Player~$I$ again picks another equivalence class. 

Formally, we define the parity game with costs~$(\arena, \cp(\col))$ with arena~$\arena = (V, V_I, V_O, E, \cost)$ where
\begin{itemize}
	\item $V = V_I \cup V_O$ with $V_I = \set{v_\initmark} \cup \R \times Q' \times\set{0,1}$ and $V_O = \R \times Q' \times \R$, and
	\item $E$ is the union of the following sets of edges:
		\begin{itemize}
			\item $\set{(v_\initmark, (S_0, q_\initmark', S_1)) \mid S_0, S_1 \in \R}$: the initial moves of Player~$I$ (which subsume the first (trivial) move of Player~$O$),
			\item $\set{((S_0,(q_0,t_0),0),(S_0,(q_0,t_0),1)) \mid S_0 \in \R, (q_0,t_0) \in Q'}$: deterministic moves of Player~$I$ used to simulate the effect of a run of type~$t_0$,
			\item $\set{((S_0,(q_0,t_0),1),(S_0,(q_0,t_0),S_1)) \mid S_0, S_1 \in \R, (q_0,t_0) \in Q'}$: regular moves of Player~$I$ picking the next equivalence class~$S_1$, and
			\item $\set{((S_0, (q_0, t_0), S_1),(S_1, (q_1, t_1),0)) \mid S_0, S_1, \in \R, (q_0, t_0), (q_1, t_1) \in Q', (q_1, t_1) \in r_{S_0}(q_0)}$: moves of Player~$O$ picking the next state~$(q_1, t_1)$.
		\end{itemize}
	\item We define $\cost(e) = \inc $, if, and only if, $e = ((S_0,(q_0,t_0),0),(S_0,(q_0,t_0),1))$ with $t_0 = (q,q',c, c', \inc)$ for some $q,q' \in Q$ and $c,c' \in \col(Q)$, i.e., we simulate the cost encoded in the type~$t_0$.
	\item Finally, for $t = (q,q', c_0, c_1, \ell)$ we define $\col(S, (q, t),0) = c_0$ and $\col(S, (q, t),1) = c_1$, i.e., we simulate the largest response and afterwards the largest unanswered request encoded in $t$. Every other state has color~$0$, which has no effect on the satisfaction of the parity condition with costs, as it is too small to answer requests.
\end{itemize}

As an illustration of the construction, consider Figure~\ref{fig_arenaconstruction}, which depicts the vertices reached while simulating a play prefix of $\game(\aut)$. Note that the infix $(S_1, (q_1, t_1), 0 )(S_1, (q_1, t_1), 1 )$ has largest response~$c_0^1$ and largest unanswered request~$c_1^1$, just as encoded by the type~$t_1$. Similarly, the cost of this infix is the one encoded by $t_1$. All other vertices and edges are neutral. Furthermore, the type~$t_0$ encoded in the initial state~$q_\initmark' = (q_\initmark, \init_\aut(q_\initmark))= (q_0, t_0) $ is not simulated, just as it is ignored when it comes to determining the winner of a play in $\game(\aut)$.

\begin{figure*}
\centering	
\begin{tikzpicture}[thick]
\def\y{-1.8}
\node[p1,label=north:$0$] at (0,0) (a) {$v_\initmark$};
\node[p0,ellipse, inner sep = 0pt,label=north:$0$] at (3,0)(b) {$(S_0, (q_0, t_0), S_1)$};
\node[p1,label=north:$c_0^1$] at (7,0) (c) {$(S_1, (q_1, t_1),0)$};
\node[p1,label=north:$c_1^1$] at (11,0) (d) {$(S_1, (q_1, t_1),1)$};
\node[p0,ellipse, inner sep = 0pt,label=north:$0$] at (3,\y) (e) {$(S_1, (q_1, t_1), S_2)$};
\node[p1,label=north:$c_0^2$] at (7,\y) (f) {$(S_2, (q_2, t_2),0)$};
\node[p1,label=north:$c_1^2$] at (11,\y) (g) {$(S_2, (q_2, t_2),1)$};
\node at (14,\y) (h) {$\cdots$};

\path
(a) edge node[above] {$\eps$} (b)
(b) edge node[above] {$\eps$}(c)
(c) edge node[above] {$\ell_1$}(d)
(e) edge node[above] {$\eps$}(f)
(f) edge node[above] {$\ell_2$}(g)
(g) edge (h);	
\path[draw, rounded corners,->] (d.east) -| (12.5,-.8) -- node[above]{$\eps$} (1,-.8) |- (e.west);
\end{tikzpicture}
\caption{The construction of $\arena$: a play prefix~$S_0, (q_0, t_0) S_1, (q_1, t_1), S_2, (q_2, t_2)$ of~$\game(\aut)$  with $t_i = (q_i, q_i', c_0^i, c_1^i, \ell_i)$ is simulated by the depicted sequence of vertices. Colors are depicted above vertices, edge weights above edges. }
\label{fig_arenaconstruction}
\end{figure*}
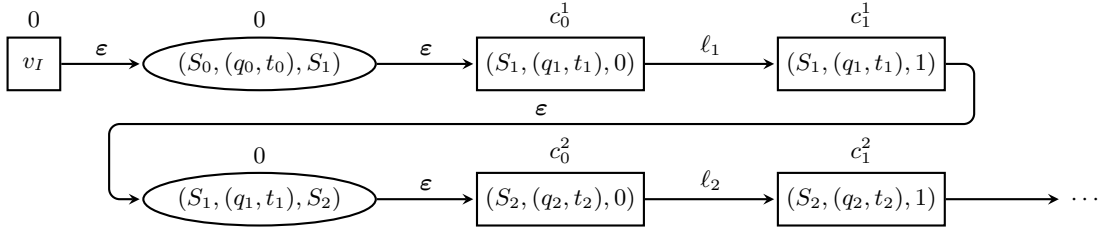

Thus, Corollary~\ref{corollary_runreplacement}, implies that Player~$O$ wins $\game(\aut)$ if, and only if, Player~$O$ has a winning strategy for the parity game with costs~$(\arena, \cp(\col))$ from $v_\initmark$. 
The construction of $(\arena, \cp(\col))$ is possible in exponential time using the same automata construction to determine the elements of $\R$ as in the case of plain parity conditions~\cite{KleinZimmermann16}.
 \end{proof}

If Player~$O$ wins an arena-based parity game with costs, then there is also a winning strategy for her whose cost is bounded by the number of vertices of the arena~\cite{FijalkowZimmermann14,WeinertZimmermann16}. Hence, an application of Lemma~\ref{lemma_runreplacement} yields an exponential upper bound on the cost of a winning strategy in a delay game with such a winning condition.

\begin{corollary}
\label{cor_upperboundlookaheadandbound}
Let $\aut$ be a parity automaton with costs with $n$ states and $k$ colors. If Player~$O$ wins $\delaygame{L(\aut)}$ for some $f$, then she also wins $\delaygame{L(\aut)}$ for the constant delay function~$f$ given by $f(0) = 2^{2n^4k^2+2}$ with a winning strategy~$\stratO$ satisfying $\cost_\aut(\stratO) \le n^3k^22^{2n^7k^4+3}$.
\end{corollary}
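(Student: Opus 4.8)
The plan is to chain the two equivalences established in this section and then transfer a low-cost strategy from the arena-based game back to the delay game. First I would invoke Lemma~\ref{lemma_equivalence}: since Player~$O$ wins $\delaygame{L(\aut)}$ for some $f$, she wins the abstract game~$\game(\aut)$, which by the construction in the proof of Theorem~\ref{thm_decidability} is equivalent (via Corollary~\ref{corollary_runreplacement}) to the delay-free parity game with costs~$(\arena, \cp(\col))$. Now I would apply the bound recalled just above the corollary: a winning player in an arena-based parity game with costs has a winning strategy whose cost is at most the number~$N = \size{V}$ of vertices of $\arena$. Pulling this strategy back through the simulation underlying Theorem~\ref{thm_decidability} yields a winning strategy~$\stratO'$ for Player~$O$ in $\game(\aut)$ such that every consistent play~$S_0(q_0,t_0)S_1(q_1,t_1)\cdots$ induces an abstract run $\rho_{t_1}\rho_{t_2}\cdots$ of cost at most $N$, since the colors $c_0^i, c_1^i$ and the increment edges of $\arena$ record exactly the cost-of-response of this run up to the bounded length of the fixed representatives $\rho_t$.

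Next I would reuse the construction in the second direction of Lemma~\ref{lemma_equivalence} to turn $\stratO'$ into a winning strategy~$\stratO$ for Player~$O$ in $\delaygame{L(\aut)}$ for the constant delay function with $f(0) = 4d = 2^{2n^4k^2+2}$. The key point is that this construction is cost-aware: a consistent outcome~${\alpha \choose \beta}$ of $\stratO$ has run $\rho_1\rho_2\cdots$, where each real infix~$\rho_i$ has the same type as the corresponding representative~$\rho_{t_i}$ and has length at most $2d$. Thus the sequences $(\rho_{t_i})_i$ and $(\rho_i)_i$ meet the hypotheses of Lemma~\ref{lemma_runreplacement}, with the abstract run of cost at most $N$ and with $\sup_i\size{\rho_i} \le 2d$. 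The lemma then bounds the cost of the real run by $(N+2)\cdot 2d$, and since neither $N$ nor $d$ depends on the outcome, this holds uniformly over all outcomes, giving $\cost_\aut(\stratO) \le (N+2)\cdot 2d$.

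It then remains to estimate $N$. The vertex set is dominated by $V_O = \R \times Q' \times \R$, so $N$ is of the order $\size{\R}^2\cdot\size{Q'}$; using $\size{\R} \le d = 2^{2n^4k^2}$ and $\size{Q'} \le 2n^3k^2$ and multiplying by the factor $2d$ from Lemma~\ref{lemma_runreplacement} produces a bound of the form $n^3k^2\cdot 2^{e+3}$ with $e$ exponential in $n$ and $k$, which I would relax to the stated $n^3k^22^{2n^7k^4+3}$. I expect the main obstacle to be neither the chaining of the equivalences nor the vertex count, but the \emph{cost transfer} itself: one must argue carefully that the cost bound in $(\arena, \cp(\col))$ genuinely bounds $\limsup_n\paritydist(\rho_{t_1}\rho_{t_2}\cdots, n)$, so that Lemma~\ref{lemma_runreplacement} applies with the correct value of $b$, and that the per-block replacement inflates the cost by exactly the bounded factor~$2d$ simultaneously for \emph{every} outcome of $\stratO$ rather than only for a single fixed run.
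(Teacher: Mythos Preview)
Your proposal is correct and follows essentially the same route as the paper: the paper's argument for this corollary is the single sentence preceding it, namely that a winning strategy in the arena-based parity game with costs has cost bounded by~$\size{V}$, and that an application of Lemma~\ref{lemma_runreplacement} then transfers this bound through the simulation of Lemma~\ref{lemma_equivalence} to the delay game with $f(0)=4d$. You have unpacked exactly this chain, and your identification of the cost transfer from the arena play to the run $\rho_{t_1}\rho_{t_2}\cdots$ (and then to $\rho_1\rho_2\cdots$) as the delicate step is apt; the point that the per-block length bound $2d$ is uniform over all outcomes is precisely what makes the $\limsup$ bound in Lemma~\ref{lemma_runreplacement} yield a uniform bound on $\cost_\aut(\stratO)$.
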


Due to Proposition~\ref{prop_lowerbounds}, these bounds are asymptotically tight.

\section{Streett Conditions}
\label{sec_streett}
In this section, we consider the more general case of delay games with winning conditions given by automata with finitary Streett acceptance or with Streett conditions with costs. In a parity condition, the requests and responses are hierarchically ordered.  Streett conditions generalize parity conditions by giving up this hierarchy. 

Formally, a Streett automaton with costs is a tuple~$\aut = (Q, \Sigma, q_\initmark, \delta, (Q_j, P_j)_{j\in J}, (\cost_j)_{j\in J}) $ where $Q$, $\Sigma$, $q_\initmark$, and $\delta$ are defined as for parity automata with costs. Furthermore, the acceptance condition~$(Q_j, P_j)_{j\in J}$ consists of a finite collection of Streett pairs~$(Q_j, P_j)$ of subsets~$Q_j, P_j \subseteq Q$. Here, states in $Q_j$ are requests of condition~$j$ which are answered by visiting a response in~$P_j$. Finally, $ (\cost_j)_{j\in J}$ is a collection of cost functions for $\aut$, one for each Streett pair. The size of $\aut$ is defined as $\size{Q} + \size{J}$.

For a run~$(q_0, a_0, q_1)(q_1, a_1, q_2)(q_2, a_2, q_3)\cdots$ and a position~$n$, we define the cost-of-response~$\streettdist_j(\rho, n)$ of pair~$j$ to be $0$, if $q_n \notin Q_j$, and to be
\begin{align*}
\min\set{ \cost_j((q_n,a_n,q_{n+1})\cdots(q_{n'-1},a_{n'-1}, q_{n'})) \mid n' \ge n \text{ and } q_{n'} \in P_j},
\end{align*}
if $q_n \in Q_j$. Furthermore, we aggregate these costs by defining 
\[
\streettdist(\rho, n)  = \max\nolimits_{j\in J} \streettdist_j(\rho, n).
\]
Finally, we say that $\rho$ is accepting if it satisfies the Streett condition with costs, i.e., if $\limsup_{n \to\infty}\streettdist(\rho,n) < \infty$. We recover classical Streett acceptance as the special case where every edge is an $\epsilon$-edge w.r.t.\ every cost function. Similarly, finitary Streett~\cite{ChatterjeeHenzingerHorn09} acceptance is the special case where every edge is an increment-edge w.r.t.\ every cost function. 

In the following, we consider delay games with winning conditions specified by such automata. The notion of the cost of a strategy for Player~$O$ and that of optimality is defined as in the case of parity automata. Also, note that every parity condition is a Streett condition, i.e., all lower bounds already proven hold for Streett conditions as well. In particular, exponential lookahead is necessary to win delay games with finitary Streett conditions and solving such games is $\exptime$-hard. We complement these lower bounds by doubly-exponential upper bounds, both on the necessary lookahead and on the solution complexity. It is open whether this gap can be closed. Nevertheless, we show in Section~\ref{sec_tradeoffs} that doubly-exponential lookahead is necessary for \emph{optimal} strategies!

Our first step towards these results is the generalization of the replacement lemma for parity conditions with costs. To this end, we have to generalize the notion of types. To this end, fix a Streett automaton with costs~$\aut = (Q, \Sigma, q_\initmark, \delta, (Q_j, P_j)_{j\in J}, (\cost_j)_{j\in J})$. The set of types of $\aut$ is defined as $T_\aut = Q^2 \times \set{\bot, p, q, pq}^J \times \set{\eps,\inc}^J$. The type of a non-empty finite run~$(q_{0}, a_{0}, q_{1}) \cdots (q_{n_1}, a_{n-1}, q_{n})$ is defined as $(q_0, q_n, g, \ell )$ where 
\begin{itemize}
	\item $g(j) = pq$ if the run contains a response of condition~$j$ as well as an unanswered request of condition~$j$,
	\item $g(j) = p$ if the run contains a response of condition~$j$, but no unanswered request of condition~$j$,
	\item $g(j) = q$ if the run contains no response of condition~$j$, but an unanswered request of condition~$j$, and
	\item $g(j) = \bot$ if the run contains neither a request nor a response of condition~$j$.
\end{itemize}
Furthermore, $\ell(j)$ is equal to $\inc$ if, and only if, the run contains an increment-transition with respect to $\cost_j$.

With this definition, the replacement property formalized in Lemma~\ref{lemma_runreplacement} also holds for runs of Streett automata with costs, which is proven using essentially the same argument as for parity automata with costs. Similarly, Corollary~\ref{corollary_runreplacement} holds for Streett automata with costs as well. Also, as for parity automata with costs, the type of a run can be computed on the fly using functions~$\init_\aut$ and $\update_\aut$ with the same properties as their analogues in Remark~\ref{remark_tracking}.

Using these results, we determine upper bounds on the necessary lookahead and the complexity of solving delay games induced by Streett automata with costs. Here, the exponential increase in complexity in comparison to games induced by parity automata with costs stems from the fact that there are exponentially many types for Street automata, but only polynomially many for parity~automata. 

\begin{theorem}
\label{thm_constantsufficesstreett}
Let $L$ be recognized by a Streett automaton with costs~$\aut$ with $n$ states and $k$ Streett pairs, and let $f$ be the constant delay function with $f(0) = 2^{n^42^{3k}+1}$. The following are equivalent:
\begin{enumerate}

	\item Player~$I$ wins $\delaygame{L(\aut)}$.

	\item Player~$I$ wins $\delaygamep{L(\aut)}$ for every delay function~$f'$.

\end{enumerate}
\end{theorem}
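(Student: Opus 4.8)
The plan is to adapt the proof of Theorem~\ref{thm_constantsuffices} essentially verbatim, replacing the parity notion of types by the Streett notion of types introduced above. As before, the direction from~(2) to~(1) is trivial, since the fixed constant function~$f$ is itself one of the delay functions~$f'$ quantified over in~(2). For the converse I would fix a winning strategy~$\stratI$ for Player~$I$ in $\delaygame{L(\aut)}$ and, given an arbitrary delay function~$f'$, pump its moves into a winning strategy for $\delaygamep{L(\aut)}$. The infrastructure carries over unchanged: I extend~$\aut$ to an automaton~$\aut'$ over the state set~$Q' = Q \times T_\aut$ that tracks the Streett type of its run using the maps~$\init_\aut$ and $\update_\aut$ (which, as noted just before the theorem, exist for Streett automata with the same properties as in Remark~\ref{remark_tracking}), define the projected powerset transition function~$\delta_P$ on $2^{Q'}$ together with its iterate~$\delta_P^+$, and reuse the characterization of Remark~\ref{remark_powersetcharac} verbatim.

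The only genuinely new computation is the size of the lookahead, which is driven by the number of Streett types. Since $T_\aut = Q^2 \times \set{\bot,p,q,pq}^J \times \set{\eps,\inc}^J$, there are $\size{T_\aut} = n^2 \cdot 4^k \cdot 2^k = n^2 2^{3k}$ types, hence $\size{Q'} = n^3 2^{3k}$ and
\[
d = \size{(2^{Q'})^Q} = 2^{\size{Q'}\cdot\size{Q}} = 2^{n^4 2^{3k}},
\]
so that the block length~$2d$ matches $f(0) = 2^{n^4 2^{3k}+1}$ exactly. Just as in Remark~\ref{rem_pumpdecomp}, a pigeonhole argument over the $d$ possible values of the map~$q \mapsto \delta_P^+(\set{(q,\init_\aut(q))},\cdot)$ shows that every block~$\overline{a}\in\SigmaI^d$ admits a pumpable decomposition~$\overline{a}=xyz$, i.e., one with $y$ non-empty that leaves the value of~$\delta_P^+$ on~$x$ invariant for every starting state~$q\in Q$.

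With these constants in place, I would run the identical simulation: Player~$I$'s blocks in the simulated play~$\Gamma = \delaygame{L(\aut)}$ are split into pumpable factors, the middle factors~$y_i$ are repeated~$h_i$ times to manufacture enough lookahead to cover the demands of~$f'$, and the surplus is stored in a buffer~$\buffer_i$. Because the pumpable decomposition is defined through~$\delta_P$ on the full state set~$Q'$ (which includes the type component), repeating~$y_i$ preserves both the state~$\aut$ reaches and the Streett type of the corresponding run infix; this lets me reconstruct Player~$O$'s answers block by block exactly as in Figure~\ref{fig_equivalenceproof}. Finally, the outcome in~$\Gamma$ is rejecting, and the run infixes on corresponding blocks have the same Streett type and length at most~$d$, so the Streett version of Lemma~\ref{lemma_runreplacement} (valid by the remarks preceding the theorem) transfers the rejection to the outcome in $\delaygamep{L(\aut)}$.

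The main obstacle is not really a new difficulty but a verification: one must be sure that the replacement lemma and the on-the-fly type computation survive the passage from parity types to the richer per-pair Streett types~$\set{\bot,p,q,pq}^J$. Both facts have been asserted for Streett automata with costs in the paragraphs preceding the theorem, and the crucial point is that a pumpable decomposition obtained from~$\delta_P$ controls the \emph{entire}~$Q'$-state, and hence the type, simultaneously for all Streett pairs. Consequently the pumping argument goes through with no essential change, the only price being the exponential-in-$k$ blowup of~$d$ caused by the $4^k\cdot 2^k$ possible type components per state pair.
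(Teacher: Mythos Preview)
Your proposal is correct and is precisely the approach the paper takes: its entire proof reads ``Similar to the one of Theorem~\ref{thm_constantsuffices} using $d = \size{(2^{Q \times T_\aut})^{Q}} = 2^{n^42^{3k}}$,'' which is exactly the computation you carry out (with $\size{T_\aut}=n^2 2^{3k}$ and hence $\size{Q'}\cdot\size{Q}=n^4 2^{3k}$). Your spelled-out verification of the type count and of the applicability of the Streett replacement lemma fills in the details the paper leaves implicit.
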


\begin{proof}
Similar to the one of Theorem~\ref{thm_constantsuffices} using $d = \size{({2^{Q \times T_\aut}})^{Q}} = 2^{n^42^{3k}}$.
\end{proof}

Thus, we obtain $f(0) = 2^{n^42^{3k}+1}$ as an upper bound on the necessary constant lookahead for Player~$O$ to win a delay game with winning condition~$L$. 

Also, the decidability proof for parity conditions is applicable to Streett conditions, again with an exponential blowup. 

\begin{theorem}
\label{thm_decidabilitystreett}
The following problem is in $\twoexp$: given a Streett automaton with costs~$\aut$, does Player~$O$ win $\delaygame{L(\aut)}$ for some $f$?
\end{theorem}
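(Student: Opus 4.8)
The plan is to mirror the structure of the proof of Theorem~\ref{thm_decidability}, which established $\exptime$-membership for parity automata with costs, and to track exactly where the combinatorial parameters change when moving from parity to Streett conditions. The overall strategy is a two-step reduction: first, use the Streett analogue of Theorem~\ref{thm_constantsuffices} (namely Theorem~\ref{thm_constantsufficesstreett}) to reduce the existential-delay-function problem to a single game with a fixed constant delay function $f(0)=2^{n^4 2^{3k}+1}$; second, construct the abstract game~$\game(\aut)$ and realize it as a delay-free arena-based Streett game with costs of doubly-exponential size, which can then be solved within the allotted time.

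First I would verify that the abstract game~$\game(\aut)$ and the equivalence Lemma~\ref{lemma_equivalence} both go through verbatim for Streett automata with costs. The only ingredients these arguments use are: the notion of type, the functions~$\init_\aut$ and $\update_\aut$ that compute types on the fly, the replacement Lemma~\ref{lemma_runreplacement}, and its Corollary~\ref{corollary_runreplacement}. The text preceding this theorem has already asserted that all four of these carry over to Streett automata with costs (with the new type set $T_\aut = Q^2 \times \set{\bot,p,q,pq}^J \times \set{\eps,\inc}^J$), so I may invoke them directly. The equivalence classes $\R$ are defined exactly as before via the functions~$r_x$, and the abstract game has the same shape, so Lemma~\ref{lemma_equivalence} reduces the decision problem to determining the winner of $\game(\aut)$.

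Next I would carry out the arena construction of Theorem~\ref{thm_decidability}, adapting only the encoding of the winning condition. Instead of simulating a parity-type $(q,q',c_0,c_1,\ell)$ by two vertices coloured $c_0$ and $c_1$ with a single cost bit, I would simulate a Streett-type $(q,q',g,\ell)$ by walking through a short gadget that, for each Streett pair~$j\in J$, records whether $g(j)$ indicates a pending request and/or a response, and equips the corresponding edge with cost~$\inc$ precisely when $\ell(j)=\inc$. This yields a delay-free arena-based Streett game with costs whose vertex set is built from $\R \times Q' \times (\text{bounded gadget index})$; since $\size{Q'} = n\cdot\size{T_\aut}$ and $\size{\R} \le 2^{\size{(2^{Q'})^Q}}$, the arena has doubly-exponential size and at most $\size{J}$ Streett pairs. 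By Corollary~\ref{corollary_runreplacement}, Player~$O$ wins $\game(\aut)$ iff she wins this arena game from the initial vertex. Invoking the known algorithm for solving arena-based Streett games with costs~\cite{FijalkowZimmermann14}, whose running time is polynomial in the arena size and exponential in the number of pairs, gives a $\twoexp$ bound overall.

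The main obstacle I expect is a bookkeeping one rather than a conceptual one: I must confirm that the Streett-game gadget faithfully encodes the aggregated cost-of-response~$\streettdist$, so that the infinite run $\rho_{t_1}\rho_{t_2}\cdots$ is accepting if and only if the induced play in the arena satisfies the Streett-with-costs condition. The subtlety is that a single type~$g(j)=pq$ must be unpacked into the correct \emph{order} of request and response within the gadget, and the per-pair cost bits $\ell(j)$ must be placed on the right edges so that the $\max_{j}$ aggregation in $\streettdist$ is reproduced. Once this encoding is checked against Corollary~\ref{corollary_runreplacement}, the complexity accounting is routine: the arena is doubly-exponential, it is constructible in doubly-exponential time by the same automata construction used for $\R$ in the parity case~\cite{KleinZimmermann16}, and solving it stays within $\twoexp$.
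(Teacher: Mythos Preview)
Your proposal is correct and follows essentially the same approach as the paper: build the abstract game~$\game(\aut)$ using Streett types, prove the analogue of Lemma~\ref{lemma_equivalence}, realize $\game(\aut)$ as an arena-based Streett game with costs of doubly-exponential size with the same number of Streett pairs, and solve it via~\cite{FijalkowZimmermann14}. One minor slip: the bound should read $\size{\R}\le\size{(2^{Q'})^{Q}}$ rather than $2^{\size{(2^{Q'})^{Q}}}$, which is what you actually need to conclude that the arena is only doubly-exponential.
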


\begin{proof}
Given $\aut$, one constructs an abstract game~$\game(\aut)$ as for the parity case and proves the analogue of Lemma~\ref{lemma_equivalence}. Then, one models $\game(\aut)$ as an arena-based Streett game with costs of doubly-exponential size with the same number of Streett pairs as $\aut$, which can be solved in doubly-exponential time~\cite{FijalkowZimmermann14}.
\end{proof}

Again, modeling the abstract game as an arena-based game yields an upper bound on the cost of a winning strategy for a delay game with winning condition given by a Streett automaton with costs: in an arena-based Streett game with costs, there is a tight exponential bound on the cost of an optimal strategy~\cite{FijalkowZimmermann14,WeinertZimmermann17}. This implies a triply-exponential upper bound for the original delay game.

\section{Trading Lookahead for Costs}
\label{sec_tradeoffs}
Introducing lookahead allows Player~$O$ to win games she loses in a delay-free setting. In this section, we study another positive effect of lookahead: it allows to reduce the cost of optimal strategies, i.e., one can trade lookahead for quality and vice versa. To simplify our notation, let $f_k$ for $k \ge 0$ denote the unique constant delay function with $f_k(0) = k+1$. Thus, $k$ denotes the size of the lookahead. In particular, a delay game~$\delaygamec{0}{L}$ is a delay-fee game.

\subsection{Tradeoffs for Parity Conditions}
\label{subsec_tradeoffs_parity}

First, we consider parity conditions and show that already the smallest possible lookahead allows to improve the cost of an optimal strategy from~$\size{\aut}$ to $1$.

\begin{theorem}
\label{thm_tradeoff}
For every $n>0$, there is a language~$L_n$ recognized by a finitary Büchi automaton with costs~$\aut_n$ with $n+2$ states such that
\begin{itemize}
	\item an optimal strategy for $\delaygamec{0}{L_n}$ has cost~$n$, but
	\item an optimal strategy for $\delaygamec{1}{L_n}$ has cost~$1$. 
\end{itemize}
\end{theorem}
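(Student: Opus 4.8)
The plan is to construct a finitary B\"uchi automaton $\aut_n$ (colors $1$ and $2$, all transitions increment-transitions) over an alphabet $\SigmaI \times \SigmaO$ whose winning condition forces a separation between the delay-free cost and the cost achievable with a single letter of lookahead. The core idea is to make Player~$O$'s \emph{first} response in each block depend on a letter of Player~$I$ that, without lookahead, is not yet available. Concretely, I would take $\SigmaI = \set{1,\dots,n,\sharpsym}$ (or a similarly sized alphabet) and design $\aut_n$ so that the target state of color~$2$ (the only color-$2$ state, answering all pending requests) can be reached quickly only if Player~$O$ correctly predicts the value of Player~$I$'s next symbol. The $n$ extra states beyond the two "control" states form a chain or gadget $G_1,\dots,G_n$ (reusing the gadget style of $\aut_n$ in Example~\ref{example_automata}) that records how far off Player~$O$'s guess was, so that a wrong guess of magnitude~$m$ incurs a cost-of-response of exactly~$m$, with the worst case being~$n$.

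The first step is to verify the lower bound for $\delaygamec{0}{L_n}$: without lookahead, when Player~$O$ must commit to $v_i \in \SigmaO$ after seeing only $u_0\cdots u_i$, Player~$I$ can always choose his next letter $u_{i+1}$ adversarially so as to maximize the distance to the next color-$2$ state. I would argue that against any strategy for Player~$O$, Player~$I$ can force a play on which cost-of-response~$n$ is incurred infinitely often (so that $\limsup_{n\to\infty}\paritydist(\rho,n) = n$), while simultaneously keeping the play in $L_n$ so that the strategy is still \emph{winning} (this is essential — the cost of an optimal strategy is only defined for winning strategies). This requires designing the automaton so that Player~$O$ can always eventually answer and stay accepting, but Player~$I$ controls the \emph{promptness} of each answer. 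The matching upper bound for cost~$n$ is immediate since any single cost-of-response is bounded by~$n$ by the size of the gadget.

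The second step is the upper bound for $\delaygamec{1}{L_n}$: with $f_1(0)=2$, Player~$I$ must reveal $u_{i+1}$ before Player~$O$ plays $v_i$. I would describe a strategy in which Player~$O$ reads the one-symbol lookahead and plays exactly the response that leads $\aut_n$ to the color-$2$ state in one step (a "copycat" or "predict-the-next-letter" strategy), yielding cost-of-response~$1$ at every request and hence $\cost_\aut(\stratO)=1$. The matching lower bound of~$1$ (cost cannot be~$0$) follows because $\aut_n$ has odd colors by the standing assumption after the Remark in Section~\ref{subsec_automata}, so the initial state opens a request of color~$1$ that must be answered by a strictly later color-$2$ state, giving cost at least~$1$ on every play; thus no winning strategy has cost~$0$.

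The step I expect to be the main obstacle is the delay-free lower bound: I must design the transition structure so that (i) Player~$O$ \emph{can} win without lookahead at all (otherwise "an optimal strategy has cost~$n$" is vacuous, since optimality is only among winning strategies), yet (ii) every winning strategy is forced to incur cost-of-response~$n$ infinitely often. Balancing these two requirements is delicate: the gadget must give Player~$O$ a guaranteed but slow route to the accepting state regardless of her guess, while denying her any fast route when Player~$I$ plays adversarially. I would reuse the bad-$j$-pair / gadget machinery of Example~\ref{example_automata}, where a guess $\beta(0)=j$ against a true next letter determines a path of length proportional to the discrepancy, and carefully check that Player~$I$'s adversarial choice of the next letter (available to him but hidden from $O$ without lookahead) always maximizes this discrepancy to~$n$ on a recurring basis while keeping the run accepting.
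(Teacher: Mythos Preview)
Your proposal has a genuine gap: the ``discrepancy'' mechanism you sketch does not yield cost exactly~$n$ in the delay-free game. If Player~$O$ guesses some $j \in \set{1,\ldots,n}$ and the penalty is proportional to the distance between $j$ and Player~$I$'s next letter, then Player~$I$ cannot force discrepancy~$n$ against a guess near $n/2$; you would get cost roughly~$n/2$, not~$n$. More broadly, the bad-$j$-pair machinery from Example~\ref{example_automata} is a red herring here. That gadget is designed so that Player~$O$ needs \emph{exponential} lookahead to locate a bad pair, and it does not compute any ``magnitude of discrepancy'' --- it simply tracks whether a specific value has repeated without a larger value intervening. Nothing in that construction gives you a single-letter prediction whose failure costs exactly~$n$.

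The paper's construction is far simpler and you should discard the $\set{1,\ldots,n,\sharpsym}$ alphabet entirely. Take $\SigmaI = \SigmaO = \set{0,1}$. From the (color-$2$) initial state, Player~$O$'s letter~$b$ moves to one of two color-$1$ states; from there, if Player~$I$'s next letter equals~$b$ the run returns to the initial state in one step, and if it differs the run is forced onto a chain of $n-1$ further color-$1$ states before returning. That is the whole automaton: $n+2$ states, binary alphabet. With lookahead~$1$, Player~$O$ sees Player~$I$'s next bit and copies it, so every request is answered with cost~$1$. Without lookahead, Player~$I$ always plays $1-b$ after seeing~$b$, forcing the long detour and hence cost~$n$ infinitely often; yet the run still visits the initial state infinitely often, so Player~$O$ wins regardless and the cost~$n$ is realized by an optimal winning strategy. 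Your worry about balancing ``Player~$O$ can win'' against ``every winning strategy has cost~$n$'' dissolves: \emph{every} run of this automaton is accepting, and the only question is the length of each excursion away from the initial state.
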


\begin{proof}
Consider the finitary Büchi automaton~$\aut_n$ depicted in Figure~\ref{fig_tradeoff} over $\SigmaI \times \SigmaO = \set{0,1}^2$. Every run of $\aut_n$ visits the initial state infinitely often, which answers all requests. Thus, consider a run starting and ending in the initial state, but not visiting it in between. There are two types of such runs, those of length two and those of length~$n+1$ (visiting the gray state). Runs of the first type process a word of the form~${* \choose b}{b \choose *}$ for some $b \in \set{0,1}$ and an arbitrary letter~$*$, runs of the second type a word of the form~${* \choose b}{1-b \choose *}{* \choose *}^{n-1}$. A run having infinitely many infixes of the second type has cost~$n$, otherwise it has cost~$1$. Thus, to achieve cost~$1$, Player~$O$ has to predict the next move of Player~$I$. This is possible with constant lookahead~$1$, but not without lookahead. 
 \end{proof}

\begin{figure*}
\begin{floatrow}
\floatbox{figure}[.3\textwidth][\FBheight][t]
{\caption{The automaton~$\aut_n$ for the proof of Theorem~\ref{thm_tradeoff}. The path from the gray state to the doubly-lined state has $n-1$ edges and $*$ denotes an arbitrary letter.}
\label{fig_tradeoff}}
{\scalebox{1.2}{
\begin{tikzpicture}[thick]
\tikzset{state/.style = {shape = circle, draw, thick, minimum size = 0.6cm}}

\node[state]	 				at (1.5,0) (n) {};
\node[state] 					at (-1.5,0) (s) {};
\node[state, accepting] 		at (0,0) (c) {};

\node[state]					at (0,-1.5) (l1) {};
\node[state, fill=black!20]					at (0,-4.5) (ln) {};
\node							at (0,-3) (ldots) {$\vdots$};

\path[->] (0,0.6) edge (c);
\path[->]
(ln) edge node[right] {${* \choose *}$} (ldots)
(ldots) edge node[right] {${* \choose *}$} (l1)
(l1) edge node[right,near start] {${* \choose *}$} (c)
(c) edge[bend left] node[above,near start] {${* \choose 0}$} (n)
(c) edge[bend right] node[above,near start] {${* \choose 1}$} (s)
(n) edge[bend left] node[below,near start] {${0 \choose *}$} (c)
(s) edge[bend right] node[below,near start] {${1 \choose *}$} (c);

\path[draw, ->, rounded corners] (n.south) -- node[right, near end] {${1 \choose *}$} (.7,-4.5) -- (ln.east);
\path[draw, ->, rounded corners] (s.south) -- node[left, near end] {${0 \choose *}$} (-.7,-4.5) -- (ln.west);

\end{tikzpicture}	
}}\hspace*{.02\textwidth}
\floatbox{figure}[.65\textwidth][\FBheight][t]
{         \caption{The automaton~$\aut_3$ for the proof of Theorem~\ref{thm_tradeoff_gradual}. A transition labeled by ${ * \choose *}^2$ represents a path of two transitions, each labeled with ${ * \choose *}$, where $*$ denotes an arbitrary letter. The missing transitions of the gray states lead to a sink state of color~$1$.}
          \label{fig_tradeoff_gradual}}
{\scalebox{1.2}{
\begin{tikzpicture}[thick]
\tikzset{state/.style = {shape = circle, draw, thick, minimum size = 0.6cm}}
\def\x{1.4}
\def\y{1}
\def\yoffset{.2}
\node[state, accepting] 	(c) 	at (2*\x,0) {};

\node[state]				(30)	at (1*\x,\y+\yoffset) {};
\node[state]				(20)	at (2*\x,2*\y) {};

\node[state]				(31)	at (1*\x,-\y-\yoffset) {};
\node[state]				(21)	at (2*\x,-2*\y) {};
%
%
\node[state, fill = black!20]				(30p)	at (0,\y+\yoffset) {};
\node[state, fill = black!20]				(20p)	at (-1*\x,2*\y) {};
\node[state]				(10p)	at (-2*\x,3*\y-\yoffset) {};
\node[state, fill = black!20]				(31p)	at (0,-\y-\yoffset) {};
\node[state, fill = black!20]				(21p)	at (-1*\x,-2*\y) {};
\node[state]				(11p)	at (-2*\x,-3*\y+\yoffset) {};
\node[state]				(l1)	at (0,0) {};
\node[state]				(l2)	at (-1*\x,0) {};
\node[state]				(l3)	at (-2*\x,0) {};
\node[state]				(l4)	at (-3*\x,0) {};

\path[->] (2*\x+.6,0) edge (c);

\path[->]
(c) edge	node[above]	{\,\,\,\,\,\,\,${* \choose (3,0)}$} (30)
(c) edge[]	node[right, near end]	{\!${* \choose (2,0)}$} (20)
;

\path[draw, rounded corners,->]
(c) --	node[right, near start]	{${* \choose (1,0)}$} (2.8*\x,\y+\yoffset) -- (2.8*\x,3*\y-\yoffset) -- (10p);

\path[draw, rounded corners,->]
(c) --	node[right, near start]	{${* \choose (1,1)}$} (2.8*\x,-\y-\yoffset) -- (2.8*\x,-3*\y+\yoffset) -- (11p);

\path[->]
(c) edge	node[below]	{\,\,\,\,\,\,\,\,${* \choose (3,1)}$} (31)
(c) edge[]	node[right, near end]	{\!${* \choose (2,1)}$} (21)
;
\path[->]
(l4) edge node[above] {${* \choose *}$} (l3)
(l3) edge node[above] {${* \choose *}^2$} (l2)
(l2) edge node[above] {${* \choose *}^2$} (l1)
(l1) edge node[above,near start] {${* \choose *}$} (c);

\path[->]
(30) edge node[above] {${* \choose *}^2$} (30p)
(20) edge node[above] {${* \choose *}$} (20p)
(31) edge node[below] {${* \choose *}^2$} (31p)
(21) edge node[below] {${* \choose *}$} (21p);
\path[->]
(30p) edge node[right] {${0 \choose *}$} (l1)
(31p) edge node[right] {${1 \choose *}$} (l1)
(20p) edge node[right] {${0 \choose *}$} (l2)
(21p) edge node[right] {${1 \choose *}$} (l2)
(10p) edge node[right] {${0 \choose *}$} (l3)
(11p) edge node[right] {${1 \choose *}$} (l3)
(10p) edge[bend right] node[right] {${1 \choose *}$} (l4)
(11p) edge[bend left] node[right] {${0 \choose *}$} (l4);

\end{tikzpicture}
 }}
\end{floatrow}
\end{figure*}

Another simple example shows that even exponential lookahead might be necessary to achieve the smallest cost possible, relying on the exponential lower bound shown in Example~\ref{example_delaygame}.

\begin{theorem}
\label{thm_tradeoffexp}
For every $n>0$, there is a language $L_n'$ recognized by a finitary Büchi automaton with costs~$\aut_n'$ with  $\bigo(n)$ states such that 
\begin{itemize}
	\item Player~$O$ wins $\delaygame{L_n'}$ for every delay function~$f$, but
	\item an optimal strategy for $\delaygamec{2^n}{L_n'}$ has cost~$0$, and
	\item an optimal strategy for $\delaygamec{k}{L_n'}$ for $k<2^n$ has cost~$n$.
\end{itemize}
\end{theorem}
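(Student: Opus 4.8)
The plan is to obtain $\aut_n'$ by grafting a cost-$n$ escape onto the bad-$j$-pair detector of Example~\ref{example_automata}. Formally, $\aut_n'$ has an initial color~$2$ state~$q_0$ at which Player~$O$ commits, with her first letter, to one of two irrevocable options. Playing \emph{continue} sends the automaton into a fixed length-$n$ cycle with a single color~$2$ state that advances on every input letter; playing a \emph{guess}~$j$ sends it into the gadget~$G_j$ of Figure~\ref{fig_automataexample}, from which the only way forward is to an absorbing color~$2$ sink~$s$, reached precisely by using the distinguished output letter to close a bad $j$-pair (a premature or mistimed signal, or the absence of such a pair, leaves Player~$O$ trapped in the color~$1$ states of~$G_j$). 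This needs $O(n)$ states and only the colors~$1$ and~$2$, so $\aut_n'$ is a finitary Büchi automaton with costs. Since the continue-cycle visits color~$2$ every $n$ steps regardless of Player~$I$'s letters, following it is a winning strategy of cost exactly~$n$ for \emph{every} delay function, which already shows that Player~$O$ wins $\delaygame{L_n'}$ for all~$f$.

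For the upper end of the tradeoff I would exploit the lookahead of $\delaygamec{2^n}{L_n'}$ directly. At position~$0$ Player~$O$ has already seen the pre-committed block~$\alpha(1)\cdots\alpha(2^n)$, and by the counting fact recalled in Example~\ref{example_delaygame} every word of length~$2^n$ over $\set{1,\ldots,n}$ contains a bad $j$-pair for some~$j$. She therefore guesses that~$j$, walks through~$G_j$ (whose reset on every letter larger than~$j$ automatically isolates a bad pair), and closes the pair to enter~$s$. The tail of the resulting run is constantly color~$2$, so this strategy has cost~$0$, which is optimal.

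The lower end---that $\delaygamec{k}{L_n'}$ has optimal cost exactly~$n$ for every $k<2^n$---is the crux, and here the one-time choice at~$q_0$ pays off. The continue strategy already gives cost~$\le n$, so it suffices to show every \emph{winning} strategy has cost~$\ge n$. For this I would replay the adversary of Example~\ref{example_delaygame}: Player~$I$ pre-commits the length-$(k+1)$ prefix of $1\,w_n$ and, upon seeing Player~$O$'s first letter, plays a fixed $j'\neq j$ forever if that letter was a guess~$j$. Because $k<2^n$, the block $\alpha(1)\cdots\alpha(k)$ is a prefix of~$w_n$ (which has length $2^n-1$) and hence contains no bad $j$-pair for any~$j$, while no further~$j$ is ever played; thus a guessing strategy can never reach~$s$ and loses. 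Against this adversary a winning strategy is therefore forced to play \emph{continue} at position~$0$, and the corresponding outcome runs through the cycle with gap exactly~$n$, so its cost is~$\ge n$.

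The step I expect to fight hardest for is making the guess genuinely all-or-nothing, so that an under-informed Player~$O$ cannot rescue a speculative attempt: $G_j$ must admit \emph{no} inexpensive route back to a color~$2$ state, for otherwise a winning strategy could guess yet still undercut cost~$n$. Keeping the branch a single, non-repeating decision at~$q_0$---by letting the cycle run through its own color~$2$ state, which $q_0$ never re-enters---is exactly what collapses the lower bound to the \emph{single} denial already carried out in Example~\ref{example_delaygame}; permitting Player~$O$ to re-guess at each return to color~$2$ would instead demand an online argument that Player~$I$ can parry infinitely many guesses with bounded lookahead, a complication this design sidesteps.
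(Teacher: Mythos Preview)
Your approach is essentially the paper's: graft a cost-$n$ cycle onto the bad-$j$-pair construction of Example~\ref{example_delaygame}, let Player~$O$ make a one-shot choice at the start, and reuse the exponential-lookahead analysis for both the upper and lower end of the tradeoff. The paper's variant differs only in cosmetic detail---it keeps the full multi-round automaton~$\aut_n$ (recolored so every non-sink state has color~$2$) rather than committing to a single~$G_j$ immediately, and its cycle has $n+1$ rather than $n$ transitions---but the argument is the same, and your lower-bound paragraph is in fact more explicit than the paper's about why a single fixed adversary suffices to force cost~$\ge n$ on every winning strategy.
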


 \begin{proof}
Let $\aut_n'$ be the disjoint union of the automaton~$\aut_n$ from Example~\ref{example_automata}, where we modify the coloring to assign every state (but the sink states) color~$2$ (i.e., every run avoiding the sinks has cost~$0$) and a cycle of $n+1$ increment-transitions with exactly one state of color~$1$, every other state has color~$2$ (i.e., every run has cost~$n$). Finally, we add a fresh initial state and transitions to let Player~$O$ decide with her first move in which automaton the remaining outcome is processed. As she can always move into the cycle, which only has accepting runs, she wins  $\delaygame{L_n'}$ for every delay function~$f$. However, every strategy moving into the cycle with the first move has cost $n$. 

On the other hand, we have argued in Example~\ref{example_delaygame} that with exponential lookahead, Player~$O$ can avoid the sinks states of $\aut_n$, and thereby guarantee cost~$0$. Finally, with smaller lookahead, Player~$O$ has to enter the cycle, as we have argued in Example~\ref{example_delaygame} that Player~$I$ is able to force the run on the outcome into a sink state in that case. 
 \end{proof}

Finally, we generalize Theorem~\ref{thm_tradeoff} to a gradual tradeoff, i.e., with every additional increase of the lookahead decreases the cost of an optimal strategy, up to some upper bound. 

\begin{theorem}
\label{thm_tradeoff_gradual}
For every $n>0$, there is a language~$L_n''$ recognized by a finitary Büchi automaton with costs~$\aut_n''$ with $\bigo(n^2)$ states such that for every $j \in \set{0,1, \ldots, n}$: an optimal strategy for $\delaygamec{j}{L_n''}$ exists, but has cost~$2(n+1)-j$. 
\end{theorem}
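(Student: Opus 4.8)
The goal is to build a finitary Büchi automaton $\aut_n''$ that exhibits a gradual tradeoff: each additional unit of lookahead decreases the optimal cost by exactly one, from $2(n+1)$ (the delay-free cost) down to $n+1$ (achievable with lookahead~$n$). The example~$\aut_3$ in Figure~\ref{fig_tradeoff_gradual} is the template. My plan is to generalize that picture, where Player~$O$'s first-move choice of a pair~$(i,\beta) \in \set{1,\ldots,n}\times\set{0,1}$ routes the run into one of $2n$ branches, and the \emph{depth} of the branch she picks determines how far ahead she must have correctly guessed Player~$I$'s letter in order to keep the cost low.

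First I would describe the structure of $\aut_n''$ precisely. From the accepting (initial) state, Player~$O$'s choice of $(i,\beta)$ leads along a branch consisting of a prefix of $2(n-i)+1$ neutral transitions to a gray decision state, at which the automaton reads Player~$I$'s next letter. If that letter matches (confirming $O$'s earlier guess was correct) the run is routed back into the ``short'' return path to the initial state; if it mismatches, the run is routed into a longer return path, or into a sink of color~$1$ when no further consistent continuation exists, exactly as in Figure~\ref{fig_tradeoff_gradual}. The lengths along the two return paths ($l_1,\ldots,l_4$ in the figure) are calibrated so that a full cycle through branch~$i$ has length $2(n+1)-j$ when Player~$O$ commits to the branch reachable with lookahead~$j$. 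The key invariant I would establish is that the cost of a run — the smallest~$b$ such that almost every infix of length~$b$ contains a color-$2$ state — equals the length of the longest such cycle that is traversed infinitely often, and that this length is controlled by how early Player~$O$ must fix her move relative to when the decisive letter of Player~$I$ is revealed.

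Next I would argue the two bounds for each~$j\in\set{0,1,\ldots,n}$. For the upper bound (a strategy of cost~$2(n+1)-j$ exists with lookahead~$j$), Player~$O$ uses her $j$ letters of lookahead to commit, on each visit to the initial state, to the deepest branch whose decisive letter she can already see; she then plays it out, guaranteeing that every cycle has length at most $2(n+1)-j$ and hence the run has cost~$2(n+1)-j$. For the matching lower bound (no winning strategy does better), I would give a strategy for Player~$I$ that, against any strategy of Player~$O$ restricted to lookahead~$j$, repeatedly plays the letter that \emph{contradicts} whatever deeper branch $O$ would need, forcing her either into a color-$1$ sink (losing outright) or into a cycle of length exactly $2(n+1)-j$ infinitely often; the crucial point is that with only~$j$ letters of lookahead, $O$ cannot see the decisive letter of any branch shallower than the one reachable at depth~$j$, so Player~$I$ retains a free choice that $O$ must answer blindly.

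\textbf{The main obstacle.} The delicate part is the precise calibration of the path lengths so that the cost increments are uniform — i.e., that branch~$i$ with return-path routing yields cost exactly $2(n+1)-j$ for the branch committed with lookahead~$j$, rather than some coarser or overlapping set of values — together with verifying that Player~$I$ genuinely has a letter available that blocks every branch $O$ cannot yet confirm. Concretely, one must check that the $n$ nested branches interleave so that seeing $j$ letters reveals the decisive letter of exactly the branches of depth up to~$j$ and no more, which is where the asymmetric path lengths ($2(n-i)+1$ versus the shorter return segments) in Figure~\ref{fig_tradeoff_gradual} do the work. I would carry this out by induction on~$j$, using the replacement Lemma~\ref{lemma_runreplacement} to reduce reasoning about the infinite run to reasoning about the finitely many cycle types, so that the cost of the outcome is determined purely by which cycle lengths recur infinitely often.
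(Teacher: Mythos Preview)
Your approach matches the paper's: branches indexed by $(i,b)$ where branch~$i$ requires lookahead~$i$ to confirm the prediction, calibrated so that the successful cycle through branch~$i$ has length $2(n+1)-i$; for $j\ge 1$ Player~$O$ picks branch~$j$ and succeeds, while for $j=0$ she must pick branch~$1$ and Player~$I$ contradicts her, routing her through the longer cycle of length $2(n+1)$. Your upper- and lower-bound arguments are exactly the paper's.

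One concrete error: your count of $2(n-i)+1$ neutral transitions before the decision state does not match the construction. In the automaton of Figure~\ref{fig_tradeoff_gradual} (and its generalization), branch~$i$ has $i-1$ neutral transitions between the choice of $(i,b)$ and the gray check state, so the decisive input letter sits exactly $i$ positions ahead of the choice---this is precisely why lookahead~$i$ is both necessary and sufficient for that branch. With your formula the decisive letter would sit $2(n-i)+2$ positions ahead, giving required lookaheads $2,4,\ldots,2n$ rather than $1,2,\ldots,n$, and the costs~$2(n+1)-j$ would no longer line up with the lookaheads~$j\in\set{0,\ldots,n}$. The asymmetry you noticed is real, but it lives in the \emph{return} path (the chain $l_1,\ldots,l_{n+1}$), whose segment lengths are chosen so that entering at $l_{n+1-i}$ yields a total cycle of $2(n+1)-i$.

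Finally, Lemma~\ref{lemma_runreplacement} is not needed here: every run decomposes into bounded-length cycles through the accepting initial state, so the cost of the run is simply the maximal such cycle length that occurs infinitely often, which can be read off directly from the branch chosen.
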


\begin{proof}
The finitary Büchi automaton~$\aut_n''$ has alphabet~$\set{0,1} \times \set{(i,j) \mid i \in \set{1, \ldots, n}, j \in \set{0,1} }$ and is depicted in Figure~\ref{fig_tradeoff_gradual} for $n = 3$.

The automaton generalizes the idea from Theorem~\ref{thm_tradeoff}. For every $j \in \set{1, 2, 3}$, with lookahead $j$ at the initial state, Player~$O$ can use a transition of the form $(j,b)$, where $b$ is the letter picked by Player~$I$ $j$ positions ahead. The resulting run infix from the initial state back to it has a request that is answered with cost~$2(n+1)-j$, none with larger cost, and ends with all requests being answered. With less lookahead, Player~$I$ can falsify the prediction by moving from the gray states to the sink state and thereby win. Thus, an optimal strategy for $\delaygamec{j}{L_n''}$ with $j>0$ has cost~$2(n+1)-j$. 

Finally, for $j = 0$, Player~$O$ has to always pick a letter of the form~$(1,b)$ when at the initial state. The prediction can be immediately falsified by Player~$I$ by picking $1-b$ in the next round, leading to a request that is answered with cost~$8 = 2(n+1)-j$. 

The automaton~$\aut_3''$ can easily be generalized to an arbitrary $n$ by allowing Player~$O$ for every $j \in \set{1, \ldots, n}$ to predict the letter picked by Player~$I$ $j$ positions ahead with a cost of $2(n+1)-j$. 
 \end{proof}

After exhibiting these tradeoffs, a natural question concerns upper bounds on the tradeoff between quality and lookahead. The results on (delay-free, arena-based) parity games with costs imply that an optimal strategy for $\delaygamec{0}{L(\aut)}$ has cost at most $2\size{\aut}$ (with a little more effort, the factor~$2$ can be eliminated): every such game can be modeled as an arena-based parity game with costs by splitting the transitions of the automaton into two moves. For such games, it is known that the cost of an optimal strategy is at most the number of states of the arena~\cite{FijalkowZimmermann14,WeinertZimmermann16}. On the other hand, we have shown in Corollary~\ref{cor_upperboundlookaheadandbound} that exponential lookahead and exponential cost is (simulatenously) achievable, if Player~$O$ wins at all. These results constrain the type of lookahead exhibited in the previous theorems.

\subsection{Tradeoffs for Streett Conditions}
\label{subsec_tradeoffs_streett}
To conclude this section, we consider Streett conditions with costs. Recall that there is a trivial exponential lower bound on the necessary lookahead in delay games with Streett conditions with costs obtained from the same lower bound for parity conditions with costs. However, we only proved a doubly-exponential upper bound. Next, we show that this upper bound is tight, when considering strategies realizing the smallest possible cost with respect to \emph{all} delay functions.

To this end, we consider a modification of the \emph{bad $j$-pair game} described in Examples~\ref{example_automata} and \ref{example_delaygame} showing an exponential lower bound on the necessary lookahead for parity conditions with costs. Recall that Player~$O$ needs lookahead~$2^n+1$ when picking numbers from~$\set{1, \ldots, n}$. Thus, to prove a doubly-exponential lower bound, it suffices to implement this game with numbers from the range~$\set{0, \ldots, 2^n-1}$ (encoded in binary to keep the alphabet small) by an automaton of polynomial size in~$n$. However, we have to modify the rules of the game, as such a small automaton cannot recognize the winning condition, which requires to distinguish~$2^n$ different choices for $y_0$. Instead, we would like to require Player~$O$ to pick~$y_i = y_0$ for all $i>0$ and to mark the two positions~$i$ and $i'$ inducing the bad $y_0$-pair by some special markers~$\markl$ and $\markr$. Then, the automaton just has to check that $x_j$ and $y_j$ are equal at the marked positions and that $x_j$ is strictly smaller than $y_j$ in between these positions. Due to the binary encoding, both checks are easily implemented using the transition structure. 

It remains to explain how to require Player~$O$ to copy her choice~$y_0$. As before, using the state space requires too many states. Instead, we employ the finitary Streett condition with respect to a small bound to enforce the copying. The $j$-th bit of a binary encoding of a number opens a request that can only be answered by encountering the same bit at the same position of a later encoding. Thus, to answer these requests with cost~$n$, the numbers have to be copied. 

To simplify our notation, we say that a delay function~$f$ eventually grants a lookahead of size~$m$, if there is an $i$ such that $\sum_{0 \le i' \le i}(f(i)-1) \ge m$. Furthermore, we say a Streett pair in a Streett automaton is qualitative (finitary), if the associated cost function assigns $\eps$ ($\inc$) to every transition. 

\begin{theorem}
For every $n>0$, there is a language $L_n$ recognized by a Streett automaton with costs~$\aut_n$ of polynomial size in $n$ such that 
\begin{itemize}
	\item Player~$O$ has a winning strategy~$\stratO$ for $\delaygame{L_n}$ for some $f$ with $\cost(\stratO) = n$, but
	\item if Player~$O$ has a winning strategy~$\stratO$ for $\delaygame{L_n}$ with $\cost(\stratO) = n$, then $f$ eventually grants a lookahead of size~$n\cdot (2^{2^n}-1)$.
\end{itemize}
\end{theorem}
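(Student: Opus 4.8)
The plan is to transport the bad-pair game of Examples~\ref{example_automata} and~\ref{example_delaygame} from the alphabet $\{1,\ldots,n\}$ to $\{0,\ldots,2^n-1\}$, writing each number in binary so that $\aut_n$ stays polynomial. Player~$O$ is asked to copy one number $y_0$ into every block of her output and to mark, with $\markl$ and $\markr$, two input positions forming a bad $y_0$-pair; the tests $x=y$ at the marked positions and $x<y$ strictly between them are performed bitwise by a linear gadget, and an illegal marking leads to a rejecting sink. I would encode ``a legal pair has been closed'' by a single \emph{qualitative} Streett pair (a B\"uchi request on an absorbing found-state, hence contributing cost~$0$), and enforce the copying of $y_0$ by \emph{finitary} Streett pairs indexed by bit position and bit value, so that a run has finite cost exactly when the copied value is eventually constant, and then cost precisely~$n$ (one block length). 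Thus $\cost(\stratO)=n$ is equivalent to ``$\stratO$ eventually copies a fixed value and reaches the found-state'', and the whole analysis reduces to the combinatorial fact, scaled from $n$ to $2^n$ letters: the no-bad-pair word $W_n$ has length $2^{2^n}-1$, and every word of length $2^{2^n}$ over $\{0,\ldots,2^n-1\}$ contains a bad pair~\cite{KleinZimmermann16}.

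For the positive part I would take $f$ with $f(0)=n\cdot 2^{2^n}$, so that Player~$I$'s opening block fixes $2^{2^n}$ input numbers \emph{before} Player~$O$ emits a single bit of $y_0$. By the counting fact this blindly committed window already contains a bad $v$-pair for some $v$, which Player~$I$ cannot have avoided. Player~$O$ therefore sets $y_0=v$, marks that pair, and from then on merely keeps copying $y_0$ forever: every copying request is answered within one block, so $\cost(\stratO)=n$, the found-state is reached, and no illegal marking ever occurs.

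For the lower bound, assume $\stratO$ is a cost-$n$ winning strategy for an $f$ that never grants lookahead $m=n(2^{2^n}-1)$, so Player~$O$'s window is always shorter than $W_n$. Since the finitary copying pairs have cost~$n$, the copied value is eventually constant, say $y_\infty$. I would then beat $\stratO$ exactly as in Example~\ref{example_delaygame}: Player~$I$ plays $W_n$ (which has no bad pair for \emph{any} letter) on everything Player~$O$ can see before committing, and uses the delay to read off $y_\infty$ and afterwards play only letters that never close a bad $y_\infty$-pair (emitting $y_\infty$ at most once more, or always inserting a larger letter between successive $y_\infty$'s). His entire input then contains no bad $y_\infty$-pair, so the fixed-value Player~$O$ can never legally place $\markr$; the found-state is unreachable and the run rejects, contradicting that $\stratO$ wins. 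Hence a cost-$n$ winning strategy forces $f$ to grant lookahead at least $m=n(2^{2^n}-1)$.

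The main obstacle is the bookkeeping in this last step. Player~$O$ may change her copied value finitely often before settling on $y_\infty$, and Player~$I$ learns each committed value only after a lag of fewer than $|W_n|$ blocks; one must also separate the genuinely blind opening lookahead from the later, reactive lookahead that the definition of ``eventually grants'' lumps together. I would therefore show that Player~$I$ can, simultaneously and adaptively, keep his whole input free of a bad pair for \emph{every} value that Player~$O$ copies in almost all blocks: the prefix fixed before any such value is revealed is a proper infix of $W_n$ and hence bad-pair-free, while everything committed after reading a value avoids a bad pair for it, and the one-number offset between $\markl$ and the input position it certifies closes the gap to the exact threshold $n(2^{2^n}-1)$. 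Making this uniform avoidance strategy precise, across the transient values and the finite lag, is the only delicate point; the construction of $\aut_n$ and the upper bound are routine once the equivalence ``cost~$n$ $=$ copy a fixed value'' is in place.
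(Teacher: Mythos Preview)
Your upper-bound argument and the automaton sketch match the paper's. The gap is in the lower bound, and it is exactly the obstacle you name but do not resolve. In a one-shot game the $\limsup$ semantics of cost lets Player~$O$ deviate from her eventual constant value~$y_\infty$ at \emph{finitely many} positions without affecting the cost, and nothing then prevents her from placing a legal $\markl/\markr$ pair during that transient phase using values unrelated to~$y_\infty$. With only the checks you describe ($x=y$ at the marks, $x<y$ strictly between), she can even mark two consecutive blocks~$p,p{+}1$ with $y_p=x_p$ and $y_{p+1}=x_{p+1}$, pass both tests vacuously, reach the found-state, and thereafter copy any fixed value---all with lookahead zero and cost~$n$. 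Your proposed repair, having Player~$I$ keep his input bad-pair-free ``for every value Player~$O$ copies in almost all blocks'', does not help: that set is the singleton~$\{y_\infty\}$, and the marking need not involve~$y_\infty$ at all. Asking Player~$I$ to avoid bad $v$-pairs for \emph{every} transient~$v$ is impossible once he has exhausted~$W_n$; playing a constant~$c\neq y_0$, for instance, immediately creates bad $c$-pairs that Player~$O$ can mark after a single switch of value.

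The paper closes this gap with a device your construction lacks: Player~$I$ carries a $\sharpsym$-mark that lets him \emph{restart rounds}, and the finitary copying requests are answered not only by a matching bit at the same residue but also by the next block carrying~$\markr$. Thus cost~$n$ forces exact copying \emph{within each round} (from the round's start until~$\markr$), not merely in the limit. Player~$I$ opens every round with the bad-pair-free word, learns that round's first value~$y_i$, and then plays some fixed $x\neq y_i$ until Player~$O$ has placed both marks, whereupon he starts a fresh round. Since the round then contains no bad $y_i$-pair, a correct marking forces Player~$O$ to change her value before~$\markr$, producing a request with cost strictly above~$n$ \emph{in that round}. Infinitely many rounds yield infinitely many such violations, hence $\limsup>n$. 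Without the restart mechanism Player~$I$ cannot force this recurrence, and your lower-bound argument does not go through.
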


\begin{proof}
We start by describing the language~$L_n$ and by arguing that it can be recognized by a Streett automaton with costs~$\aut_n$ of polynomial size in~$n$. Fix $\SigmaI = \set{0,1} \cup \set{0,1}\times\set{{\sharpsym}}$ and $\SigmaO = \set{0,1} \cup \set{0,1} \times \set{\markl,\markr}$ and consider a word~${ \alpha \choose \beta } \in (\SigmaI \times \SigmaO)^\omega$. By grouping the bits of $\alpha$ (ignoring the mark~${\sharpsym}$) into blocks of length~$n$, $\alpha$ can be interpreted as a sequence~$\overline{\alpha} = x_0 x_1 x_2 \cdots \in \set{0, \ldots, 2^n-1}^\omega$ of natural numbers. Analogously, the bits of $\beta$ can be interpreted as a sequence~$\overline{\beta} = y_0 y_1 y_2 \cdots \in \set{0, \ldots, 2^n-1}^\omega$ when ignoring the marks~$\markl$ and $\markr$. We say that  $x_i$ is marked, if ${\sharpsym}$ holds at the first position of the block encoding~$x_i$, that $y_i$ is marked by $\markl$, if $\markl$ holds at the first position of the block encoding~$y_i$, and $y_i$ being marked by $\markr$ is defined analogously. Player~$I$ uses his mark to start a new round while Player~$O$ uses her marks to pick bad $j$-pairs.

Fix a word~$w = {\alpha \choose \beta}$ with $\overline{\alpha} = x_0 x_1 x_2 \cdots$ and $\overline{\beta} = y_0 y_1 y_2 \cdots$. If $x_0$ is not marked, then $w \in L_n$. Thus, assume from now on that $x_0$ is marked and let $i_{\sharpsym}$ be arbitrary with $x_{i_{{\sharpsym}}}$ being marked. To be in $L_n$, $w$ has to satisfy the following condition (amongst others): there have to be exactly two marked $y_i$ with $i \ge i_{\sharpsym}$ before the next $x_{i_{\sharpsym}'}$ is marked. The first one after $i_{\sharpsym}$ has to be marked by $\markl$, the second one by $\markr$. However, if there is another marked~$x_{i_{\sharpsym}'}$ before the $y_i$ marked by $\markr$ then $w$ is in $L_n$, i.e., Player~$I$ may only start a new round after Player~$O$ has picked a bad $j$-pair in the current round. These properties can be implemented using the transition structure of the automaton and a classical Streett pair to require Player~$O$ to use her marks eventually. Also, whenever a $y_i$ is marked, it has to be equal to $x_i$. Furthermore, let $y_i$ be marked by $\markl$ and let $y_{i'}$ be the next marked number. Then, we require $x_j < y_j$ for every $j$ in the range~$i < j <i'$. These requirements can be enforced by the transition structure. 

Finally, we employ finitary Streett pairs and a small bound on the cost to enforce the copying. Every occurrence of a bit~$b$ (ignoring the marks) in $\beta$ at a position~$k$ opens a request that is only answered by a later $b$ (again ignoring the marks) in $\beta$ at a position~$k'$ with $k \bmod n = k' \bmod n$ or by a later block that is marked by $\markr$. This property is enforced by finitary Streett pairs. Thus, to answer these requests with cost~$n$, all the $y_j$ between a marked~$x_i$ (the start of a round) and the next $y_{i'}$ marked by $\markr$ (the end of the round) have to coincide.

It is straightforward to construct a Streett automaton with costs~$\aut_n$ of polynomial size that recognizes the language~$L_n$ described above. 

Now, analogously to the arguments in Example~\ref{example_delaygame}, one can show that Player~$O$ wins $\delaygame{L_n}$ for the constant~$f$ with $f(0) = n \cdot 2^{2^n}$: at the start of each round, she has enough lookahead to pick a $y_i$ such that Player~$I$ has already produced a bad $y_i$-pair in the current round. Then, she copies the $y_i$ and marks the pair correctly and waits for the start of the next round. This satisfies the qualitative Streett pairs as well as the finitary ones with cost~$n$, i.e, she wins.

Now fix some delay function~$f$ such that Player~$O$ has a winning strategy~$\stratO$ for $\delaygame{L_n}$ with $\cost(\stratO) = n$. Assume towards a contradiction that $f$ does not eventually grant a lookahead of size~$n\cdot (2^{2^n}-1)$. As mentioned in Example~\ref{example_delaygame}, there is a sequence~$\overline{x} $ of length~$2^{2^n}-1$ without a bad $j$-pair for every $j \in \set{0, \ldots, 2^n-1}$. Player~$I$'s strategy against $\stratO$ is to first play the binary encoding of $\overline{x}$, with the first number marked. Due to the small lookahead, Player~$O$ has to specify $y_0$ during these moves. Then, Player~$I$ just plays some $x \neq y_0$ until Player~$O$ has played both her marks. If she never does, then Player~$I$ wins the play. Otherwise, he just starts a new round and proceeds as previously described. 

Consider a round of an outcome of this strategy and say Player~$O$ picks $y_i$ as first number in this round. Then, the sequence of numbers picked by Player~$I$ in this round contains no bad $y_i$-pair. If Player~$O$ does not mark any numbers in this round, or they do not constitute a bad $j$-pair for some $j$, then she loses the play, which contradicts our assumption. Thus, assume she does mark a bad $j$-pair correctly. Then, we have $j \neq y_0$. This means she does not copy $y_i$ throughout the round until she plays her second mark. Thus, there is a request that is not answered with cost~$n$. As Player~$I$ is able to enforce such a request in each round, the resulting play has at least cost $n+1$, again a contradiction. 
\end{proof}

On the other hand, the previous example does not yield a doubly-exponential lower bound on the necessary lookahead for arbitrary bounds on the costs, as Player~$O$ can satisfy the acceptance condition of the automaton by starting each round by playing~$0^n$ and then $1^n$, which answers all finitary Streett pairs in the game and then correctly marking a bad $j$-pair for some $j$. This strategy has much larger cost than $n$, but is still winning. Whether there is a tradeoff (and, if yes, it's extent) remains an open problem. 

Also note that the automaton~$\aut_n$ has both classical and finitary Streett pairs. It is an open problem to show the same result for finitary Streett automata. The problem one encounters  is that the acceptance condition has to force Player~$O$ to mark a bad $j$-pair in each round. Implementing this with a finitary Streett pair increases the cost of a winning strategy, as it may take doubly-exponentially long before such a pair appears. This large bound allows Player~$O$ to cheat in the copying process, as described above.

\section{Conclusion}
\label{sec_conc}
We have demonstrated the usefulness of adding delay to games with quantitative winning conditions, here finitary parity and Streett conditions as well as parity and Streett conditions with costs. 

We have shown that delay games with parity conditions with costs are just as hard as delay games with parity conditions, both in terms of the necessary lookahead and in terms of the computational complexity of determining the winner. Thus, adding quantitative features to such games comes for free, which is in line with similar results for both delay-free games~\cite{FijalkowZimmermann14,KupfermanPitermanVardi09,Zimmermann13} and delay games~\cite{KleinZimmermann16b}. Furthermore, we exhibited the usefulness of delay by showing that lookahead can be traded for quality of strategies. This phenomenon goes beyond the advantages in qualitative delay games, where lookahead only allows to win more games. 

Another interesting property of delay-free finitary parity games is that playing them optimally is much harder than just winning them: the bounding player always has a positional winning strategy~\cite{FijalkowZimmermann14}, which is winning with respect to some uniform bound~$b$, but satisfying the optimal uniform bound might require exponential memory~\cite{WeinertZimmermann16}. Similarly, determining the optimal bound is $\pspace$-complete~\cite{WeinertZimmermann16} while just determining the winner of a delay-free finitary parity game is in $\ptime$~\cite{ChatterjeeHenzingerHorn09}. 

In current work, we study the tradeoffs between quality, memory requirements, lookahead, and solution complexity. In particular, this requires to develop a theory of finite-state strategies for delay games, which is, due to the presence of lookahead, non-trivial (see~\cite{Salzmann15} for a proposal of finite-state strategies in a setting that is similar to the definition of $\game(\aut)$). 

Furthermore, we gave a doubly-exponential upper bound on the necessary lookahead for delay games with Streett conditions with costs and showed that such games can be solved in doubly-exponential time. The best lower bounds are those for parity conditions with costs, i.e., there is an exponential gap in both cases. Note that these gaps already exists in the case of qualitative Streett conditions: doubly-exponential constant lookahead is sufficient and solving such games is in $\twoexp$ (this follows via determinization from the results for parity conditions), but the best lower bounds are exponential for the lookahead and $\exptime$-completeness. In future work, we aim to close these gaps. 

\paragraph*{Acknowledgments} We thank Alexander Weinert for numerous fruitful discussions. 

\bibliographystyle{splncs03}
\bibliography{main.bib}

\end{document}